\setlist[1]{leftmargin=*} % This is to remove the indentation of list. This can gain space
\tikzset{
%Define standard arrow tip
>=stealth',
%Define style for different line styles
help lines/.style={dashed, thick},
axis/.style={<->},
important line/.style={thick},
connection/.style={thick, dotted},
}
\newcommand{\expo}{\texttt{EXPO}}
\newcommand{\ctrl}{\texttt{CTRL}}
\newcommand{\pl}{\texttt{PL}}
\newcommand{\ltr}{\texttt{LTR}}
\newcommand{\unfairness}{\mathrm F}
\newcommand{\utility}{\mathrm U}
\newcommand{\nutility}{\mathrm{nU}}
\newcommand{\nunfairness}{\mathrm{nF}}
\newcommand{\temperature}{\tau}
\newcommand{\horizon}{T}
\newcommand{\docix}{i}  % document index
\newcommand{\rankix}{k}  % rank index
\newcommand{\splits}{S}
\newcommand{\zone}{Z}  % zone
\newcommand{\argsort}{\texttt{argsort}}
\newcommand{\defkey}[1]{\emph{#1}}
\newcommand{\seq}[2]{\{#1,\hdots,#2\}}
\DeclarePairedDelimiter{\norm}{\lVert}{\rVert}
\newcommand{\piprp}{\pi_{\PRP}}
\newcommand{\ndoc}{n}
\DeclareMathAlphabet\mathbfcal{OMS}{cmsy}{b}{n}
\renewcommand{\vrho}{\bm{\rho}}
\renewcommand{\vone}{\mathbf{1}}
\renewcommand{\vcE}{\mathbfcal{E}}
\newcommand{\vcEt}{\mathbfcal{E}^*}
\DeclareMathOperator{\PL}{PL}
\DeclareMathOperator{\PRP}{PRP}
\DeclareMathOperator{\Conv}{Conv}
\theoremstyle{definition}
\theoremstyle{theorem}\newtheorem{lemm}{Lemma}
\theoremstyle{theorem}
\theoremstyle{theorem}
\theoremstyle{theorem}
\theoremstyle{remark}\newtheorem{rema}{Remark}
\newcommand{\defeq}{\vcentcolon=}
\newcommand{\splitatcommas}[1]{%
  \begingroup
  \begingroup\lccode`~=`, \lowercase{\endgroup
    \edef~{\mathchar\the\mathcode`, \penalty0 \noexpand\hspace{0pt plus 1em}}%
  }\mathcode`,="8000 #1%
  \endgroup
}
\newcommand\thefontsize[1]{{#1 The current font size is: \f@size pt\par}}
\newacronym{acm}{ACM}{Association for Computing Machinery}
\newacronym{adaboost}{AdaBoost}{Adaptative Boosting}
\newacronym{ads}{ADS}{Automated Decision System}
\newacronym{ai}{AI}{Artificial Intelligence}
\newacronym{aif}{AIF}{Average Individual Fairness}
\newacronym{aka}{a.k.a.}{also known as}
\newacronym{ama}{AMA}{Artificial Moral Agent}
\newacronym{ap}{AP}{Average Precision}
\newacronym{bert}{BERT}{Bidirectional Encoder Representations from Transformers}
\newacronym{bisg}{BISG}{Bayesian Improved Surname Geocoding}
\newacronym{bm}{BM}{Best Matching}
\newacronym{bo}{BO}{Bayesian Optimisation}
\newacronym{bpr}{BPR}{Bayesian Personalized Ranking}
\newacronym{bp}{BP}{Belief Propagation}
\newacronym{bvn}{BvN}{Birkhoff-von Neumann}
\newacronym{cb}{CB}{Contextual Bandits}
\newacronym{cca}{CCA}{Canonical Correlation Analysis}
\newacronym{ccm}{CCM}{Click Chain Model}
\newacronym{cdssm}{C-DSSM}{Convolutional Deep Structured Semantic Model}
\newacronym{cf}{CF}{Collaborative Filtering}
\newacronym{cf.}{cf.}{confer}
\newacronym{cikm}{CIKM}{International Conference on Information and Knowledge Management}
\newacronym{clara}{CLARA}{CLick And RAnking}
\newacronym{clef}{CLEF}{Conference and Labs of the Evaluation Forum}
\newacronym{clm}{CLM}{Cumulative Link Model}
\newacronym{clsm}{CLSM}{Convolutional Latent Semantic Model}
\newacronym{cm}{CM}{Cascade Model}
\newacronym{cnn}{CNN}{Convolutional Neural Network}
\newacronym{compas}{COMPAS}{Correctional Offender Management Profiling for Alternative Sanctions}
\newacronym{cpu}{CPU}{Central Processing Unit}
\newacronym{crs}{CRS}{Conversational Recommender System}
\newacronym{csc}{CSC}{Cost Sensitive Classification}
\newacronym{csi}{CSI}{Comité de suivi de thèse}
\newacronym{css}{CSS}{Complementarity Sum Sampling}
\newacronym{ctr}{CTR}{Click Through Rate}
\newacronym{ctlr}{Ctrl}{Controller}
\newacronym{da}{DA}{Data Augmentation}
\newacronym{dag}{DAG}{Directed Acyclic Graph}
\newacronym{dbn}{DBN}{Dynamic Bayesian Network}
\newacronym{dcg}{DCG}{Discounted Cumulative Gain}
\newacronym{dcm}{DCM}{Discrete Choice Model}
\newacronym{dct}{DCT}{Discrete Cosine Transform}
\newacronym{dlrm}{DLRM}{Deep Learning Recommendation Model}
\newacronym{dm}{DM}{Decision Maker}
\newacronym{dnn}{DNN}{Deep Neural Network}
\newacronym{dp}{DP}{Differential Privacy}
\newacronym{dssm}{DSSM}{Deep Structured Semantic Model}
\newacronym{ebm}{EBM}{Energy-Based Model}
\newacronym{ebms}{EBMs}{Energy-Based Models}
\newacronym{ece}{ECE}{Empirical Calibration Error}
\newacronym{ecir}{ECIR}{European Conference on Information Retrieval}
\newacronym{ef}{EF}{Envy-Free}
\newacronym{ef1}{EF1}{Envy-Free up to one item}
\newacronym{efr}{EFR}{Expected First Relevant}
\newacronym{eg}{e.g.}{exempli gratia}
\newacronym{elmo}{ELMo}{Embeddings from Language Models}
\newacronym{erm}{ERM}{Empirical Risk Minimization}
\newacronym{err}{ERR}{Expected Reciprocal Rank}
\newacronym{ers}{ERS}{Empirical Relational System}
\newacronym{etc}{etc.}{et cetera}
\newacronym{fair}{FAIR}{Fairness-Aware Information Retrieval}
\newacronym{fia}{FIA}{Fast Influence Analysis}
\newacronym{fire}{FIRE}{Forum for Information Retrieval Evaluation}
\newacronym{fm}{FM}{Factorization Machine}
\newacronym{fultr}{FULTR}{Fair Un-biased Learning-to-Rank}
\newacronym{gan}{GAN}{Generative Adversarial Network}
\newacronym{gbdt}{GBDT}{Gradient Boosted Decision Tree}
\newacronym{gcn}{GCN}{Graph Convolution Network}
\newacronym{gdpr}{GDPR}{General Data Protection Regulation}
\newacronym{gls}{GLS}{Grötschel, Lovász and Schrijver}
\newacronym{gnn}{GNN}{Graph Neural Network}
\newacronym{gpt}{GPT}{Generative Pre-trained Transformer}
\newacronym{gpu}{GPU}{Graphics Processing Unit}
\newacronym{hpc}{HPC}{High Performance Computing}
\newacronym{hr}{HR}{Hit Rate}
\newacronym{idcg}{IDCG}{Ideal Discounted Cumulative Gain}
\newacronym{idf}{IDF}{Inverse Document Frequency}
\newacronym{ie}{i.e.,}{id est}
\newacronym{iid}{i.i.d.}{independent and identically distributed}
\newacronym{ilp}{ILP}{Integer Linear Program}
\newacronym{ipm}{IPM}{Integral Probability Metric}
\newacronym{ips}{IPS}{Inverse Propensity Score}
\newacronym{ipu}{IPU}{Intelligence Processing Unit}
\newacronym{ir}{IR}{Information Retrieval}
\newacronym{irgan}{IRGAN}{Information Retrieval Generative Adversarial Network}
\newacronym{irm}{IRM}{Invariant Risk Minimization}
\newacronym{irsvm}{IR-SVM}{Information Retrieval Support Vector Machine}
\newacronym{kl}{KL}{Kullback-Leibler}
\newacronym{knn}{KNN}{$K$-nearest neighbors}
\newacronym{lds}{LDS}{Low Discrepancy Sequence}
\newacronym{lc}{LC}{Latent Cross}
\newacronym{letor}{LETOR}{LEarning TO Rank for Information Retrieval}
\newacronym{lm}{LM}{Language Model}
\newacronym{lp}{LP}{Linear Program}
\newacronym{lsi}{LSI}{Latent Semantic Indexing}
\newacronym{lspi}{LSPI}{Least Squares Policy Iteration}
\newacronym{lstdq}{LSTD-Q}{Least Squares Temporal Difference Q-learning}
\newacronym{lte}{LTE}{Learning To Expose}
\newacronym{ltr}{LTR}{Learning to Rank}
\newacronym{mab}{MAB}{Multi-Armed Bandit}
\newacronym{map}{MAP}{Mean Average Precision}
\newacronym{mcrank}{McRank}{Multi-class Classification for Ranking}
\newacronym{mdp}{MDP}{Markov Decision Process}
\newacronym{me}{ME}{Maximum Entropy}
\newacronym{mf}{MF}{Matrix Factorization}
\newacronym{mfr}{MFR}{Mean First Relevant}
\newacronym{mhr}{MHR}{Multiple Hyperplane Ranker}
\newacronym{mi}{MI}{Mutual Information}
\newacronym{miai}{MIAI}{Multidisciplinary Institute for AI}
\newacronym{milp}{MILP}{Mixed Integer Linear Program}
\newacronym{mimd}{MIMD}{Multiple Instructions Multiple Data}
\newacronym{ml}{ML}{Machine Learning}
\newacronym{mlo}{MLO}{Machine Learning and Optimization}
\newacronym{mlp}{MLP}{MultiLayer Perceptron}
\newacronym{mms}{MMS}{MaxiMin Share}
\newacronym{mnist}{MNIST}{Modified National Institute of Standards and Technology}
\newacronym{molp}{MOLP}{Multi-Objective Linear Program}
\newacronym{moo}{MOO}{Multi-Objective Optimisation}
\newacronym{mpt}{MPT}{Modern Portfolio Theory}
\newacronym{mrr}{MRR}{Mean Reciprocal Rank}
\newacronym{mrs}{MRS}{Mulstistakeholder Recommender System}
\newacronym{mslr}{MSLR}{Microsoft Learning to Rank}
\newacronym{msp}{MSP}{Multi-Sided Platform}
\newacronym{mvd}{MVD}{MultiValued Dependency}
\newacronym{ncf}{NCF}{Neural Collaborative Filtering}
\newacronym{ndcg}{NDCG}{Normalized Discounted Cumulative Gain}
\newacronym{ndd}{nDD}{normalized Discounted Difference}
\newacronym{ndf}{NDF}{Nearest Driver First}
\newacronym{ndkl}{nDKL}{normalized Discounted Kullback-Leibler Divergence}
\newacronym{ndr}{nDR}{normalized Discounted Ratio}
\newacronym{ngcf}{NGCF}{Neural Graph Collaborative Filtering}
\newacronym{nle}{NLE}{Naver Labs Europe}
\newacronym{nlp}{NLP}{Natural Language Processing}
\newacronym{nmf}{NMF}{Non-negative matrix factorization}
\newacronym{nn}{NN}{Neural Network}
\newacronym{np}{NP}{Non-deterministic Polynomial-time }
\newacronym{nrb}{nRB}{normalized Ranking Bias}
\newacronym{nrs}{NRS}{Numerical Relational System}
\newacronym{ntcir}{NTCIR}{NII Testbeds and Community for Information Access Research}
\newacronym{ood}{OOD}{Out-Of-Distribution}
\newacronym{pb}{PB}{Poisson Binomial distribution}
\newacronym{pm}{PM}{Poisson Multinomial distribution}
\newacronym{pbm}{PBM}{Position Based Model}
\newacronym{pdp}{PDP}{Pigou-Dalton Principle}
\newacronym{pf}{PF}{Poisson Factorization}
\newacronym{pl}{PL}{Plackett-Luce}
\newacronym{pmf}{pmf}{probability mass function}
\newacronym{pomdp}{POMDP}{Partially Observed Markov Decision Process}
\newacronym{prairie}{PRAIRIE}{PaRis Artificial Intelligence Research InstitutE}
\newacronym{pranking}{PRanking}{Perceptron based Ranking}
\newacronym{pfs}{PFS}{Proportional Fair Share}
\newacronym{prp}{PRP}{Probability Ranking Principle}
\newacronym{qp}{QP}{Quadratic Programming}
\newacronym{rbm}{RBM}{Restricted Boltzmann Machine}
\newacronym{rbp}{RBP}{Rank Biased Precision}
\newacronym{rcf}{RCF}{Relational Collaborative Filtering}
\newacronym{rcm}{RCM}{Random Click Model}
\newacronym{rct}{RCT}{Randomized Controlled Trial}
\newacronym{rc}{RC}{Rank Correlation}
\newacronym{relu}{ReLU}{Rectified Linear Unit}
\newacronym{resp}{resp.}{respectively}
\newacronym{rl}{RL}{Reinforcment Learning}
\newacronym{rnn}{RNN}{Recurrent Neural Network}
\newacronym{roc}{ROC}{Receiver Operating Characteristic}
\newacronym{rr}{RR}{Round-Robin}
\newacronym{rs}{RS}{Recommender System}
\newacronym{rv}{r.v.}{random variable}
\newacronym{sa}{SA}{Self Attention}
\newacronym{sar}{SaR}{Search and Recommendation}
\newacronym{scm}{SCM}{Structural Causal Model}
\newacronym{sdbn}{SDBN}{Simplified Dynamic Bayesian Network}
\newacronym{sea}{SEA}{Safe Exploration Algorithm}
\newacronym{serp}{SERP}{Search Engine Result Page}
\newacronym{sgd}{SGD}{Stochastic Gradient Descent}
\newacronym{sigir}{SIGIR}{Special Interest Group on Information Retrieval}
\newacronym{sigmod}{SIGMOD}{Special Interest Group on Management of Data}
\newacronym{slim}{SLiM}{Sparse Linear Model}
\newacronym{st}{s.t.}{such that}
\newacronym{svd}{SVD}{Singular Value Decomposition}
\newacronym{svm}{SVM}{Support Vector Machine}
\newacronym{svmmap}{SVM$^{map}$}{Support Vector Machine maximum average precision}
\newacronym{tf}{TF}{Term Frequency}
\newacronym{tgn}{TGN}{Temporal Graph Network}
\newacronym{trec}{TREC}{Text REtrieval Conference}
\newacronym{tsp}{TSP}{Traveling Salesman Problem}
\newacronym{ubm}{UBM}{User Browsing Model}
\newacronym{ui}{UI}{User Interface}
\newacronym{url}{URL}{Uniform Resource Locator}
\newacronym{vae}{VAE}{Variational AutoEncoder}
\newacronym{vi}{VI}{Variational Inequality}
\newacronym{vsm}{VSM}{Vector Space Model}
\newacronym{wdf}{WDF}{Worst-off Driver First}
\newacronym{wlog}{w.l.o.g.}{without loss of generality}
\newacronym{wrt}{w.r.t.}{with respect to}
\newacronym{wsdm}{WSDM}{Web Search and Data Mining}
\newglossaryentry{adarank}{name=AdaRank, description={}}
\newglossaryentry{apprank}{name=AppRank, description={}}
\newglossaryentry{bn-decomp}{name=Birkhoff-von Neumann decomposition, description={A (generally non-unique) decomposition of a doubly stochastic matrix into a convex sum of permutation matrices}}
\newglossaryentry{boosting}{name=Boosting, description={Using weighed prediction of a ton of weak learners to build One strong learner, and rule them all.}}
\newglossaryentry{borda-count}{name=Borda Count, description={A vote aggregation method formalized by the chevalier de Borda}}
\newglossaryentry{jm-smooth}{name=Jelinek-Mercer Smoothing, description={Interpolating with the prior (convex combination)}}
\newglossaryentry{katz}{name={Katz measure}, description={A measure to quantify the \emph{centrality} of a node in a graph}}
\newglossaryentry{listmle}{name=ListMLE, description={}}
\newglossaryentry{listnet}{name=ListNet, description={}}
\newglossaryentry{lstm}{name=Long Short-Term Memory (LSTM), description={Memory cell structure in a \acrlong{nn} \cite{hochreiter_long_1997}. See also \url{https://en.wikipedia.org/wiki/Long_short-term_memory}}}
\newglossaryentry{luce}{name=Luce, description={A probability distribution on permutations}}
\newglossaryentry{oracle-efficient}{name=Oracle-efficient algorithm, description={An algorithm that is given access to any standard, fairness-free learning heuristic, from \cite{kearns_average_2019}}}
\newglossaryentry{ord-reg}{name=Ordinal Regression, description={Regression where the target space is characterized only by the relative order of its elements}}
\newglossaryentry{permurank}{name=PermuRank, description={}}
\newglossaryentry{rankgp}{name=RankGP, description={A genetic algorithm for learning rankings}}
\newglossaryentry{rank-agg}{name=Rank Aggregation, description={The problem of finding the ranking with maximum agreement to (possibly incompatible) pairwise orderings. It has been proven to be NP-hard.}}
\newglossaryentry{rank-boost}{name=RankBoost, description={A Boosting method using AdaBoost \cite{freund_efficient_2003}}}
\newglossaryentry{ranknet}{name=RankNet, description={A learning-to-rank algorithm \cite{burges_learning_2005}}}
\newglossaryentry{reciprocal-recommender}{name={Reciprocal Recommender}, description={A recommender platform where two types of users are recommended to each other heterosexually, \acrshort{eg} workerss and employers, men and women, french russian-learners and russian french-learners}}
\newglossaryentry{shilling}{name={Shilling Attack}, description={When an item producer gives himself lots of high ratings}}
\newglossaryentry{softrank}{name=SoftRank, description={}}
\newglossaryentry{vc-dim}{name=Vapnik-Chervonenkis-dimension, description={cardinality of the largest set of points that the algorithm can shatter, from wiki}}
\newglossaryentry{vertical}{name=vertical, description={Within a websearch, the results are presented in blocks of e.g. videos, news articles, websites, pictures. Those blocks are called verticals}}
\begin{document}
% \fancyhead{}
% \pagenumbering{arabic}
\settopmatter{printfolios=true}

\title{Pareto-Optimal Fairness-Utility Amortizations in Rankings with a DBN Exposure Model}

% \author{Double-blind}
% \author{Anonymous}
% \author{\today\\\currenttime}

\author{Till Kletti}
\email{till.kletti@naverlabs.com}
\orcid{0000-0002-8853-4618}
\affiliation{%
\institution{Naver Labs Europe}
% \city{Meylan}
\state{France}
}

\author{Jean-Michel Renders}
\email{jean-michel.renders@naverlabs.com}
\orcid{0000-0002-7516-3707}
\affiliation{%
\institution{Naver Labs Europe}
% \city{Meylan}
\state{France}
}

\author{Patrick Loiseau}
\email{patrick.loiseau@inria.fr}
\affiliation{%
\institution{Univ. Grenoble Alpes, Inria, CNRS, Grenoble INP, LIG}
% \city{Grenoble}
\state{France}
}

\renewcommand{\shortauthors}{Kletti et al.}

\begin{teaserfigure}
    \vspace{-\baselineskip}
    \begin{subfigure}[b]{0.3\textwidth}
        \centering
        \begin{tikzpicture}
	[scale=5.000000,
	back/.style={loosely dotted, thin},
	edge/.style={color=black, thick},
	facet/.style={fill=white,fill opacity=0.300000},
	vertex/.style={inner sep=1pt,circle,draw=red!25!black,fill=red!75!black,thick}]
  % Local definitions
  \def\costhirty{0.8660256}

  % Colors
  \colorlet{anglecolor}{green!50!black}
  \colorlet{sincolor}{red}
  \colorlet{tancolor}{orange!80!black}
  \colorlet{coscolor}{blue}

  % Styles
  \tikzstyle{axes}=[]
  \tikzstyle{important line}=[very thick]
  \tikzstyle{information text}=[rounded corners,fill=red!10,inner sep=1ex]

  % The graphic
  % \draw[style=help lines,step=0.1cm] (0,0) grid (1,1);

  % \begin{scope}[style=axes]
  %   \draw[->] (-0.1,0) -- (1.2,0) node[right] {$\mathcal{E}_1$};
  %   \draw[->] (0,-0.1) -- (0,1.2) node[above] {$\mathcal{E}_2$};
	%
  %   \foreach \x/\xtext in {0, 0.25/\frac{1}{4}, 0.5/\frac{1}{2}, 0.75/\frac{3}{4}, 1}
  %     \draw[xshift=\x cm] (0pt,1pt) -- (0pt,-1pt) node[below,fill=white]
  %           {$\xtext$};
	%
  %   \foreach \y/\ytext in {0, 0.25/\frac{1}{4}, 0.5/\frac{1}{2}, 0.75/\frac{3}{4}, 1}
  %     \draw[yshift=\y cm] (1pt,0pt) -- (-1pt,0pt) node[left,fill=white]
  %           {$\ytext$};
  % \end{scope}

	% \draw[color=red] (1, 0.25) -- (0.5, 1);
	% \draw[dashed] (0, 0) -- (1, 1);
	% \draw[dotted] (0, 0) coordinate (O) -- (8/10, 11/20) coordinate (X);
	% \draw[<-] (8/10, 11/20) .. controls (1,0.6) .. (1.1,0.6) node[align=left, right] {$\arg\min||\mathcal{E}||_2$};
	% \draw[<-] (7/10, 7/10) .. controls (0.9,0.7) .. (1.1,3/4) node[align=left, right] {Equity};

    % n=2 DBN expohedron for rho = (0.1, 0.9), gamma=0.5, kappa=0.7
	\draw[edge] (1, 0.185) -- (0.463, 1);
	\node[vertex] at (1, 0.185)     {};
	\node[vertex] at (0.463, 1)     {};

	% \filldraw (1,0.63) coordinate (A) circle (0.5pt) node[align=left, right] {$(\gamma_1,\gamma_2)$} --
	% (0.63,1) circle (0.5pt) node[align=left, above] {$\gamma_2,\gamma_1$};

	% Right angle
	% \draw pic [draw=black,angle eccentricity=1,pic text=$\cdot$,scale=0.3] {right angle = O--X--A};

\end{tikzpicture}
        \caption{$\ndoc=2$. The DBN-expohedron in $\R^2$ for a relevance vector $\vrho=(0.9, 0.1)$}
    \end{subfigure}
    \hfill
    \begin{subfigure}[b]{0.3\textwidth}
        \centering
        \begin{tikzpicture}%
	[x={(-0.697158cm, -0.389674cm)},
	y={(0.716917cm, -0.378888cm)},
	z={(-0.000040cm, 0.839404cm)},
	scale=3.000000,
	back/.style={loosely dotted, thin},
	edge/.style={color=black, thick},
	facet/.style={fill=white,fill opacity=0.300000},
	vertex/.style={inner sep=1pt,circle,draw=red!25!black,fill=red!75!black,thick}]
%
%
%% This TikZ-picture was produce with Sagemath version 9.3
%% with the command: ._tikz_2d_in_3d and parameters:
%% view = [-0.197800000000000, -0.468300000000000, -0.861100000000000]
%% angle = 140.020000000000
%% scale = 3
%% edge_color = black
%% facet_color = white
%% opacity = 0.300000000000000
%% vertex_color = red
%% axis = False

%% Coordinate of the vertices:
%%
\coordinate (1.00000, 0.46500, 0.15113) at (1.00000, 0.46500, 0.15113);
\coordinate (1.00000, 0.08603, 0.46500) at (1.00000, 0.08603, 0.46500);
\coordinate (0.32500, 1.00000, 0.15113) at (0.32500, 1.00000, 0.15113);
\coordinate (0.06012, 1.00000, 0.32500) at (0.06012, 1.00000, 0.32500);
\coordinate (0.18500, 0.08603, 1.00000) at (0.18500, 0.08603, 1.00000);
\coordinate (0.06012, 0.18500, 1.00000) at (0.06012, 0.18500, 1.00000);
%%
%%
%% Drawing the interior
%%
\fill[facet] (0.06012, 0.18500, 1.00000) -- (0.06012, 1.00000, 0.32500) -- (0.32500, 1.00000, 0.15113) -- (1.00000, 0.46500, 0.15113) -- (1.00000, 0.08603, 0.46500) -- (0.18500, 0.08603, 1.00000) -- cycle {};
%%
%%
%% Drawing edges
%%
\draw[edge] (1.00000, 0.46500, 0.15113) -- (1.00000, 0.08603, 0.46500);
\draw[edge] (1.00000, 0.46500, 0.15113) -- (0.32500, 1.00000, 0.15113);
\draw[edge] (1.00000, 0.08603, 0.46500) -- (0.18500, 0.08603, 1.00000);
\draw[edge] (0.32500, 1.00000, 0.15113) -- (0.06012, 1.00000, 0.32500);
\draw[edge] (0.06012, 1.00000, 0.32500) -- (0.06012, 0.18500, 1.00000);
\draw[edge] (0.18500, 0.08603, 1.00000) -- (0.06012, 0.18500, 1.00000);
%%
%%
%% Drawing the vertices
%%
\node[vertex] at (1.00000, 0.46500, 0.15113)     {};
\node[vertex] at (1.00000, 0.08603, 0.46500)     {};
\node[vertex] at (0.32500, 1.00000, 0.15113)     {};
\node[vertex] at (0.06012, 1.00000, 0.32500)     {};
\node[vertex] at (0.18500, 0.08603, 1.00000)     {};
\node[vertex] at (0.06012, 0.18500, 1.00000)     {};
\end{tikzpicture}
        \caption{$\ndoc=3$. The DBN-expohedron in $\R^3$ for a relevance vector $\vrho=(0.9, 0.5, 0.1)$}
    \end{subfigure}
    \hfill
    \begin{subfigure}[b]{0.3\textwidth}
        \centering
        \begin{tikzpicture}%
	[x={(-0.822913cm, -0.486010cm)},
	y={(0.568168cm, -0.703865cm)},
	z={(-0.000058cm, 0.518043cm)},
	scale=3.000000,
	back/.style={loosely dotted, thin},
	edge/.style={color=black, thick},
	facet/.style={fill=white,fill opacity=0.300000},
	vertex/.style={inner sep=1pt,circle,draw=red!25!black,fill=red!75!black,thick}]
%
%
%% This TikZ-picture was produce with Sagemath version 9.3
%% with the command: ._tikz_3d_in_3d and parameters:
%% view = [-0.0835000000000000, -0.268000000000000, -0.959800000000000]
%% angle = 146.690000000000
%% scale = 1
%% edge_color = black
%% facet_color = white
%% opacity = 0.300000000000000
%% vertex_color = red
%% axis = False

%% Coordinate of the vertices:
%%
\coordinate (0.79467, 0.33304, 0.08619) at (0.79467, 0.33304, 0.08619);
\coordinate (0.79467, 0.33304, -0.08429) at (0.79467, 0.33304, -0.08429);
\coordinate (0.79467, -0.08963, 0.31295) at (0.79467, -0.08963, 0.31295);
\coordinate (0.79467, -0.18679, 0.26521) at (0.79467, -0.18679, 0.26521);
\coordinate (0.79467, -0.10859, -0.30131) at (0.79467, -0.10859, -0.30131);
\coordinate (0.79467, -0.18679, -0.25936) at (0.79467, -0.18679, -0.25936);
\coordinate (0.06592, 0.79213, 0.08619) at (0.06592, 0.79213, 0.08619);
\coordinate (0.06592, 0.79213, -0.08429) at (0.06592, 0.79213, -0.08429);
\coordinate (-0.20777, 0.73041, 0.21181) at (-0.20777, 0.73041, 0.21181);
\coordinate (-0.27068, 0.71622, 0.18536) at (-0.27068, 0.71622, 0.18536);
\coordinate (-0.22005, 0.72764, -0.20451) at (-0.22005, 0.72764, -0.20451);
\coordinate (-0.27068, 0.71622, -0.18128) at (-0.27068, 0.71622, -0.18128);
\coordinate (-0.04744, -0.27956, 0.69946) at (-0.04744, -0.27956, 0.69946);
\coordinate (-0.04744, -0.37672, 0.65172) at (-0.04744, -0.37672, 0.65172);
\coordinate (-0.20777, -0.17856, 0.69946) at (-0.20777, -0.17856, 0.69946);
\coordinate (-0.27068, -0.19275, 0.67301) at (-0.27068, -0.19275, 0.67301);
\coordinate (-0.24102, -0.42038, 0.57034) at (-0.24102, -0.42038, 0.57034);
\coordinate (-0.27068, -0.40170, 0.57034) at (-0.27068, -0.40170, 0.57034);
\coordinate (-0.08523, -0.30705, -0.67123) at (-0.08523, -0.30705, -0.67123);
\coordinate (-0.08523, -0.38524, -0.62928) at (-0.08523, -0.38524, -0.62928);
\coordinate (-0.22005, -0.22212, -0.67123) at (-0.22005, -0.22212, -0.67123);
\coordinate (-0.27068, -0.23354, -0.64799) at (-0.27068, -0.23354, -0.64799);
\coordinate (-0.24102, -0.42038, -0.55777) at (-0.24102, -0.42038, -0.55777);
\coordinate (-0.27068, -0.40170, -0.55777) at (-0.27068, -0.40170, -0.55777);
%%
%%
%% Drawing edges in the back
%%
\draw[edge,back] (0.79467, -0.10859, -0.30131) -- (-0.08523, -0.30705, -0.67123);
\draw[edge,back] (0.79467, -0.18679, -0.25936) -- (-0.08523, -0.38524, -0.62928);
\draw[edge,back] (-0.22005, 0.72764, -0.20451) -- (-0.22005, -0.22212, -0.67123);
\draw[edge,back] (-0.27068, 0.71622, -0.18128) -- (-0.27068, -0.23354, -0.64799);
\draw[edge,back] (-0.24102, -0.42038, 0.57034) -- (-0.24102, -0.42038, -0.55777);
\draw[edge,back] (-0.27068, -0.40170, 0.57034) -- (-0.27068, -0.40170, -0.55777);
\draw[edge,back] (-0.08523, -0.30705, -0.67123) -- (-0.08523, -0.38524, -0.62928);
\draw[edge,back] (-0.08523, -0.30705, -0.67123) -- (-0.22005, -0.22212, -0.67123);
\draw[edge,back] (-0.08523, -0.38524, -0.62928) -- (-0.24102, -0.42038, -0.55777);
\draw[edge,back] (-0.22005, -0.22212, -0.67123) -- (-0.27068, -0.23354, -0.64799);
\draw[edge,back] (-0.27068, -0.23354, -0.64799) -- (-0.27068, -0.40170, -0.55777);
\draw[edge,back] (-0.24102, -0.42038, -0.55777) -- (-0.27068, -0.40170, -0.55777);
%%
%%
%% Drawing vertices in the back
%%
\node[vertex] at (-0.08523, -0.30705, -0.67123)     {};
\node[vertex] at (-0.22005, -0.22212, -0.67123)     {};
\node[vertex] at (-0.08523, -0.38524, -0.62928)     {};
\node[vertex] at (-0.27068, -0.23354, -0.64799)     {};
\node[vertex] at (-0.27068, -0.40170, -0.55777)     {};
\node[vertex] at (-0.24102, -0.42038, -0.55777)     {};
%%
%%
%% Drawing the facets
%%
\fill[facet] (-0.27068, 0.71622, -0.18128) -- (-0.22005, 0.72764, -0.20451) -- (0.06592, 0.79213, -0.08429) -- cycle {};
\fill[facet] (-0.27068, -0.19275, 0.67301) -- (-0.27068, 0.71622, 0.18536) -- (-0.20777, 0.73041, 0.21181) -- (-0.20777, -0.17856, 0.69946) -- cycle {};
\fill[facet] (-0.27068, -0.40170, 0.57034) -- (-0.27068, -0.19275, 0.67301) -- (-0.20777, -0.17856, 0.69946) -- (-0.04744, -0.27956, 0.69946) -- (-0.04744, -0.37672, 0.65172) -- (-0.24102, -0.42038, 0.57034) -- cycle {};
\fill[facet] (-0.27068, 0.71622, -0.18128) -- (-0.27068, 0.71622, 0.18536) -- (-0.20777, 0.73041, 0.21181) -- (0.06592, 0.79213, 0.08619) -- (0.06592, 0.79213, -0.08429) -- (-0.22005, 0.72764, -0.20451) -- cycle {};
\fill[facet] (-0.04744, -0.37672, 0.65172) -- (0.79467, -0.18679, 0.26521) -- (0.79467, -0.08963, 0.31295) -- (-0.04744, -0.27956, 0.69946) -- cycle {};
\fill[facet] (-0.20777, -0.17856, 0.69946) -- (-0.20777, 0.73041, 0.21181) -- (0.06592, 0.79213, 0.08619) -- (0.79467, 0.33304, 0.08619) -- (0.79467, -0.08963, 0.31295) -- (-0.04744, -0.27956, 0.69946) -- cycle {};
\fill[facet] (0.06592, 0.79213, -0.08429) -- (0.79467, 0.33304, -0.08429) -- (0.79467, 0.33304, 0.08619) -- (0.06592, 0.79213, 0.08619) -- cycle {};
\fill[facet] (0.79467, -0.18679, -0.25936) -- (0.79467, -0.18679, 0.26521) -- (0.79467, -0.08963, 0.31295) -- (0.79467, 0.33304, 0.08619) -- (0.79467, 0.33304, -0.08429) -- (0.79467, -0.10859, -0.30131) -- cycle {};
%%
%%
%% Drawing edges in the front
%%
\draw[edge] (0.79467, 0.33304, 0.08619) -- (0.79467, 0.33304, -0.08429);
\draw[edge] (0.79467, 0.33304, 0.08619) -- (0.79467, -0.08963, 0.31295);
\draw[edge] (0.79467, 0.33304, 0.08619) -- (0.06592, 0.79213, 0.08619);
\draw[edge] (0.79467, 0.33304, -0.08429) -- (0.79467, -0.10859, -0.30131);
\draw[edge] (0.79467, 0.33304, -0.08429) -- (0.06592, 0.79213, -0.08429);
\draw[edge] (0.79467, -0.08963, 0.31295) -- (0.79467, -0.18679, 0.26521);
\draw[edge] (0.79467, -0.08963, 0.31295) -- (-0.04744, -0.27956, 0.69946);
\draw[edge] (0.79467, -0.18679, 0.26521) -- (0.79467, -0.18679, -0.25936);
\draw[edge] (0.79467, -0.18679, 0.26521) -- (-0.04744, -0.37672, 0.65172);
\draw[edge] (0.79467, -0.10859, -0.30131) -- (0.79467, -0.18679, -0.25936);
\draw[edge] (0.06592, 0.79213, 0.08619) -- (0.06592, 0.79213, -0.08429);
\draw[edge] (0.06592, 0.79213, 0.08619) -- (-0.20777, 0.73041, 0.21181);
\draw[edge] (0.06592, 0.79213, -0.08429) -- (-0.22005, 0.72764, -0.20451);
\draw[edge] (-0.20777, 0.73041, 0.21181) -- (-0.27068, 0.71622, 0.18536);
\draw[edge] (-0.20777, 0.73041, 0.21181) -- (-0.20777, -0.17856, 0.69946);
\draw[edge] (-0.27068, 0.71622, 0.18536) -- (-0.27068, 0.71622, -0.18128);
\draw[edge] (-0.27068, 0.71622, 0.18536) -- (-0.27068, -0.19275, 0.67301);
\draw[edge] (-0.22005, 0.72764, -0.20451) -- (-0.27068, 0.71622, -0.18128);
\draw[edge] (-0.04744, -0.27956, 0.69946) -- (-0.04744, -0.37672, 0.65172);
\draw[edge] (-0.04744, -0.27956, 0.69946) -- (-0.20777, -0.17856, 0.69946);
\draw[edge] (-0.04744, -0.37672, 0.65172) -- (-0.24102, -0.42038, 0.57034);
\draw[edge] (-0.20777, -0.17856, 0.69946) -- (-0.27068, -0.19275, 0.67301);
\draw[edge] (-0.27068, -0.19275, 0.67301) -- (-0.27068, -0.40170, 0.57034);
\draw[edge] (-0.24102, -0.42038, 0.57034) -- (-0.27068, -0.40170, 0.57034);
%%
%%
%% Drawing the vertices in the front
%%
\node[vertex] at (0.79467, 0.33304, 0.08619)     {};
\node[vertex] at (0.79467, 0.33304, -0.08429)     {};
\node[vertex] at (0.79467, -0.08963, 0.31295)     {};
\node[vertex] at (0.79467, -0.18679, 0.26521)     {};
\node[vertex] at (0.79467, -0.10859, -0.30131)     {};
\node[vertex] at (0.79467, -0.18679, -0.25936)     {};
\node[vertex] at (0.06592, 0.79213, 0.08619)     {};
\node[vertex] at (0.06592, 0.79213, -0.08429)     {};
\node[vertex] at (-0.20777, 0.73041, 0.21181)     {};
\node[vertex] at (-0.27068, 0.71622, 0.18536)     {};
\node[vertex] at (-0.22005, 0.72764, -0.20451)     {};
\node[vertex] at (-0.27068, 0.71622, -0.18128)     {};
\node[vertex] at (-0.04744, -0.27956, 0.69946)     {};
\node[vertex] at (-0.04744, -0.37672, 0.65172)     {};
\node[vertex] at (-0.20777, -0.17856, 0.69946)     {};
\node[vertex] at (-0.27068, -0.19275, 0.67301)     {};
\node[vertex] at (-0.24102, -0.42038, 0.57034)     {};
\node[vertex] at (-0.27068, -0.40170, 0.57034)     {};
\end{tikzpicture}
        \caption{$\ndoc=4$. The DBN-expohedron in $\R^2$ for a relevance vector $\vrho=(0.9, 0.7, 0.5, 0.1)$}
    \end{subfigure}
    \caption{Examples of DBN-expohedra for $\gamma=0.5,\kappa=0.7$.}
    % \Description{DBN expohedron illustrations}
    \label{fig:dbn_expohedron}
\end{teaserfigure}
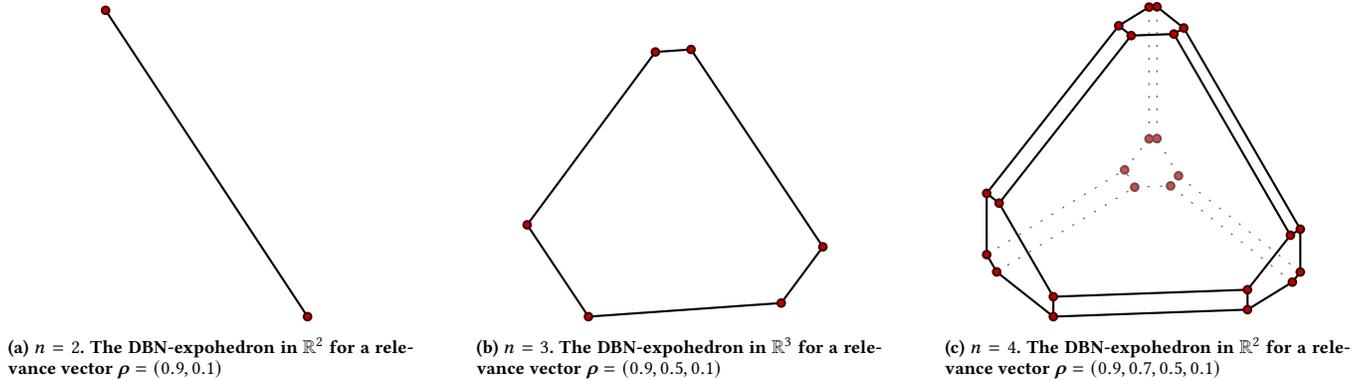

\begin{abstract}
    
In recent years, it has become clear that rankings delivered in many areas need not only be useful to the users but also respect fairness of exposure for the item producers. We consider the problem of finding ranking policies that achieve a Pareto-optimal tradeoff between these two aspects. Several methods were proposed to solve it; for instance a popular one is to use linear programming with a Birkhoff-von Neumann decomposition. These methods, however, are based on a classical Position Based exposure Model (PBM), which assumes independence between the items (hence the exposure only depends on the rank). In many applications, this assumption is unrealistic and the community increasingly moves towards considering other models that include dependences, such as the Dynamic Bayesian Network (DBN) exposure model. For such models, computing (exact) optimal fair ranking policies remains an open question.

In this paper, we answer this question by leveraging a new geometrical method based on the so-called expohedron proposed recently for the PBM (Kletti et al., WSDM'22). We lay out the structure of a new geometrical object (the DBN-expohedron), and propose for it a Carathéodory decomposition algorithm of complexity $O(\ndoc^3)$, where $\ndoc$ is the number of documents to rank. Such an algorithm enables expressing any feasible expected exposure vector as a distribution over at most $\ndoc$ rankings; furthermore we show that we can compute the whole set of Pareto-optimal expected exposure vectors with the same complexity $O(\ndoc^3)$. Our work constitutes the first exact algorithm able to efficiently find a Pareto-optimal distribution of rankings. It is applicable to a broad range of fairness notions, including classical notions of meritocratic and demographic fairness. We empirically evaluate our method on the TREC2020 and MSLR datasets and compare it to several baselines in terms of Pareto-optimality and speed.

\end{abstract}

% \keywords{Fair ranking, multi-objective optimization, exposure model, ranking policy optimization}
\keywords{fair ranking, multi-objective optimization, DBN, expohedron, GLS, Carathéodory, Pareto-optimal}

\maketitle

\section{Introduction}
    % say that pbm is not realistic and dbn is better

Automatic ranking systems take an increasingly large role in our everyday lives, be it as the result of a simple web search, online job markets, or receiving recommendations of new songs to listen to. % (\acrshort{eg} Youtube, Spotify).
Such systems connect item producers with item consumers and determine which items (webpages, songs, etc.) the consumers are exposed to, and which items remain hidden from sight.
While it is generally desirable to give much visibility to items that are relevant to consumers, it is also important to ensure that no item producers are unfairly disadvantaged in terms of exposure.
% Those two objectives are called \emph{consumer utility} and \emph{producer fairness} respectively.
% In recent years this double-objective has drawn increasing attention from the fair ranking community

Fairness has been intensely studied in many applications such as classification \cite{dwork_fairness_2012} or regression \cite{agarwal_fair_2019} tasks.
Fairness in ranking, however, presents some particular challenges that do not appear in classification or regression \cite{pitoura_fairness_2021,zehlike_fairness_2021}.
In particular since highly-ranked items are more visible, and consequently more often clicked than lowly-ranked ones, the commonly adopted policy of ordering the items by decreasing order of relevance, the \acrfull{prp}, is the optimal solution for consumer utility (under some assumptions) \cite{robertson_probability_1977}.
On the other hand, the \acrshort{prp} generally yields exposures (\acrshort{ie} visibilities) that are not adequately related to the items' \emph{merit}, which is a type of unfairness we wish to avoid.
Such an ``adequate'' relation between exposure and merit is called \emph{fairness of exposure} \cite{singh_fairness_2018}.
In this paper our ambition is not to decide which exposures are fair, but to develop methods making it possible to achieve any feasible fair exposure.
We will illustrate this on a commonly used notion of fairness, namely \emph{meritocratic} fairness, for which the merit of an item is proportionate to its relevance \cite{morik_controlling_2020}, but it also applies on other notions such as demographic fairness.

Methods able to mitigate unfairness in rankings include \emph{pre-processing}, the treatment of data before input into a ranking component; \emph{in-processing}, the training of fair ranking models; and \emph{post-processing}, a step that involves re-ranking a certain set of $\ndoc$ items, given a query.
Within the post-processing step it is possible to deliver different rankings for the same query, a process called \emph{amortization}, because the unfairness of the initial ranking can be compensated in subsequent ones.
This is usually done by finding an optimal distribution over rankings, from which rankings are sampled.

Most approaches aiming to find such an optimal distribution consider a \acrfull{pbm}, for which the exposure given to an item is assumed to only depend on its rank.
%\patrick{In this simple model, it is possible to use linear programming techniques.}
For instance it is possible to use bistochastic matrices to represent distributions over permutations and optimize utility with a fairness constraint as a linear program \cite{singh_fairness_2018}.
Conversely it is possible to optimize fairness under a utility constraint as a linear program \cite{biega_equity_2018}.
\citeauthor{morik_controlling_2020} proposed to use a controller to make sure that the exposure is proportional to a notion of merit \cite{morik_controlling_2020}.
More recently a novel geometric method using a polytope called \emph{expohedron} was proposed \cite{kletti_introducing_2022}.
All these methods rely on using a \acrshort{pbm} as exposure model.
%However in a \acrshort{pbm}, the exposures of the items depend only on their ranks, which is not a realistic assumption, because the users, while browsing the list of results, may be influenced by the documents they see, to continue or abandon the browsing.
In a \acrshort{pbm}, however, the items are assumed \emph{independent}, such that the exposure of an item depends only on its rank. This is not a realistic assumption because the users, while browsing the list of results, may be influenced by the documents they see to continue or abandon the browsing.

% \till{explain why PBM is not realistic enough}
In our work we consider a different range of exposure models encompassed by the \acrfull{dbn} model \cite{chapelle_dynamic_2009}, an exposure model that takes into account the relevance values from previously ranked items.
As such the \acrshort{dbn}, and its variants the \acrfull{cm}, \acrfull{dcm}, \acrfull{sdbn} and \acrfull{ccm} \cite{chuklin_click_2015}, are much more realistic browsing models.
Yet they have received much less attention from the fair ranking community, in part because these models are more complex than simple \acrshort{pbm}s.
Indeed, using a \acrshort{dbn} removes linearity from the problem, thereby making it impossible to solve with linear programming approaches such as \cite{singh_fairness_2018}.
The \acrshort{dbn} has been the subject of recent interest, in particular in the TREC2020 Fair Ranking track \cite{trec-fair-ranking-2019} and in a paper proposing a controller \cite{thonet_multi-grouping_2020} for it; but to date there is no method that provably computes an optimal distribution over rankings.

In this paper we draw upon the recent approach using the so-called \emph{expohedron}, proposed by  \cite{kletti_introducing_2022} for a \acrshort{pbm} exposure model, to work out a method able to find distributions over rankings with optimal expected exposures for a \acrshort{dbn} exposure model.
As \cite{kletti_introducing_2022} strongly relies on the particular (simple) structure of the \acrshort{pbm}, this entails a number of challenges, in particular to characterize theoretically the geometrical structure of the new \acrshort{dbn}-\emph{expohedron} that we obtain, illustrated in \cref{fig:dbn_expohedron}.
%This is not an easy extension as \cite{kletti_introducing_2022} strongly relies on the particular (simple) structure of the \acrshort{pbm}.
%
Then, however, it makes it possible, for any feasible target exposure, to find a distribution over rankings whose expected exposure is exactly the target exposure. Furthermore, it also enables to adapt the procedure in \cite{kletti_introducing_2022} to find the Pareto-front of a multi-objective fairness-utility optimization problem.

Our contributions can be summarized as follows:
\begin{trivlist}
    \item[(1)]\label{enum:geometry}
    Drawing upon the ideas of \cite{kletti_introducing_2022}, we define a new expohedron called \emph{\acrshort{dbn}-expohedron} based on the \acrshort{dbn} model and we provide theoretical results concerning the geometry of this new more complicated structure.
    \item[(2)] Using the above results, we devise an algorithm capable of finding the Carathéodory decomposition of any point inside a \acrshort{dbn}-expohedron.
    This makes it possible to express any feasible target exposure as the expected exposure of a distribution over rankings.
    This algorithm has complexity $O(\ndoc^3)$, where $\ndoc$ is the number of items to rank.
    \item[(3)] Using the above results we give an algorithm capable of finding the Pareto-front of a \acrfull{moo} problem invovling a fairness objective and a utility objective.
    This algorithm has complexity $O(\ndoc^3)$, where $\ndoc$ is the number of items to rank.
    \item[(4)] We perform experiments on the TREC2020 Fairness Track dataset \cite{trec-fair-ranking-2019} and on the MSLR dataset \cite{DBLP:journals/corr/QinL13}, and compare our approach in terms of Pareto-optimality and speed with respect to 3 baselines: A \acrfull{pl} policy \cite{diaz_evaluating_2020}, a heuristic controller \cite{thonet_multi-grouping_2020} and a \acrfull{ltr} method \cite{oosterhuis_computationally_2021}.
\end{trivlist}

%\patrick{This is present tense here? Well...}

% \patrick{Need to come back to the overall coolness of a geometric approach here.}

\section{Related Work}\label{sec:rw}
    In this section we review closely related work and provide some detailed background that will be used throughout the paper.

\paragraph{Learning Plackett-Luce policies}

In a recent paper \cite{oosterhuis_computationally_2021}, \citeauthor{oosterhuis_computationally_2021} proposed a \acrfull{ltr} approach using \acrfull{pl} \cite{plackett_analysis_1975} distributions as ranking policy. More specifically, the rankings are generated by a \acrfull{pl} distribution whose log-scores are given by a trained model using features associated to the (query,document) pairs as inputs.

The loss function used to train this model by policy gradient can be any convex combination of utility and fairness, which makes it possible to choose a loss function that is a scalarization of a multi-objective optimization problem.
In their paper, \cite{oosterhuis_computationally_2021} consider a \acrshort{pbm} for the exposure, but the approach can be adapted to a \acrshort{dbn}, without great difficulty.

\paragraph{Optimization over bistochastic matrices}
A popular method to find optimal distributions over rankings is to represent them as bistochastic matrices \cite{singh_fairness_2018,singh_fairness_2021,wang_user_2021}.
The so-called \acrfull{bvn} algorithm can express any bistochastic matrix as the expected value of a distribution over permutations matrices \cite{dufosse_notes_2016}.
When a \acrshort{pbm} is used, this allows to run a linear program to find an optimal bistochastic matrix, because with a \acrshort{pbm}, the exposure of a ranking is a linear function of the bistochastic matrix.
When a \acrshort{dbn} exposure model is used, such a linear relationship does not exists and using bistochastic matrices no longer makes sense.
Indeed the elements of a bistochastic matrix represent the marginal probabilities for the items of ending up at a certain rank; but as we will show later (see \eqref{eq:dbn}), this marginal information does not suffice to deduce the expected exposure of an item.
Therefore another method is needed.

\paragraph{Expohedron}
Recently, \citeauthor{kletti_introducing_2022} \cite{kletti_introducing_2022} introduced a polytope called \emph{expohedron}, whose vertices represent all exposure vectors attainable with one ranking with a \acrshort{pbm} and whose interior is the convex hull of the vertices.
The points of this polytope represent all expected exposure vectors of distributions over rankings.
The authors propose an algorithm capable of expressing any point inside the expohedron as the expected exposure of a distribution over permutations; this is called a \emph{Carathéodory decomposition}.
Furthermore they propose an algorithm able to find the set of all Pareto-optimal exposure vectors in the expohedron.
Both these algorithms have complexity $O(\ndoc^2\log(\ndoc))$, where $\ndoc$ is the number of items to rank.
However their approach strongly relies on the particular structure of \acrshort{pbm}, which does not naturally extend to the \acrshort{dbn} exposure model.
% \till{say that it does not easily generalize}

\paragraph{Controller}
In \citeyear{thonet_multi-grouping_2020}, \citeauthor{thonet_multi-grouping_2020} \cite{thonet_multi-grouping_2020} proposed a heuristic controller that starts by delivering a \acrshort{prp} ranking, looks at how much each item is advantaged or disadvantaged \acrshort{wrt} its merit and corrects this in the subsequent ranking by artificially increasing or decreasing the relevance values of the items and delivering the next ranking by ordering the items according to the corrected relevance values.
This method is compatible with a \acrshort{dbn} exposure method and will be used as a baseline.

% We draw upon the ideas of this work to develop a Carathéodo

\section{Model}\label{sec:model}
    \paragraph{Setting}
% We use the same setting as \cite{kletti_introducing_2022},
We assume a single or anonymous user that issues a query $q$ many times and we are not preoccupied about being fair towards users, but only to the provider side.
We suppose that for such a query $q$ we are provided $\ndoc$ items and estimations of their relevance value \acrshort{wrt} the query, between $0$ (irrelevant) and $1$ (relevant).
We suppose that those relevance estimations are unbiased and represent the true relevances.
This is obviously an unrealistic assumption, but it has the merit of enabling us to study the amortization problem independently of the problem of fairly estimating relevance values.
Given a query $q$, our task is to deliver a sequence of rankings that is optimal in terms of utility provided to the user and in terms of fairness of item exposure.
The remainder of this section is dedicated to the formal definitions of ranking, exposure, user-side utility, item-side fairness and to the formal introduction of our optimization problem.

\paragraph{Ranking}
We formalize a ranking as a permutation $\pi\in\cS_\ndoc$ of $\ndoc$ items such that if item $\docix$ is at rank $\rankix$ we have $\docix=\pi(\rankix)$.
Here $\cS_\ndoc$ represents the set of permutations of size $\ndoc$.
Given a relevance vector $\vrho\in\R^\ndoc$, we denote $\piprp$ an arbitrary ranking such that
\begin{equation}
    \vrho_{\piprp(1)}\geq\hdots\geq\vrho_{\piprp(\ndoc)}.
\end{equation}
A ranking $\piprp$ orders the items by decreasing relevance values.

\paragraph{Exposure model}
We use the generic \acrfull{dbn} exposure model \cite{chapelle_dynamic_2009}.
Unlike with the simpler \acrfull{pbm} used in \cite{kletti_introducing_2022,oosterhuis_computationally_2021,singh_fairness_2018}, the exposures of a \acrshort{dbn} depend on the actual relevances of the items in the ranking, thereby making it a more realistic model.
% Indeed a user finding relevant documents at the first ranking may tend to be satisfied and not further examine the rest of the list.
The \acrshort{dbn} model has two parameters:  $\gamma\in[0,1]$  (the continuation probability) and $\kappa\in[0,1]$ (the satisfaction probability of a relevant item).
Furthermore it depends on the relevance vector $\vrho\in\R^\ndoc$.
Formally the expression of the exposure $\vcE_\docix$ of an item $\docix$ for a ranking $\pi\in\cS_\ndoc$ is
\begin{equation}\label{eq:dbn}
    \vcE_\docix(\pi,\gamma,\kappa,\vrho) \defeq \gamma^{\pi^{-1}(i)-1} \prod_{l=1}^{\pi^{-1}(i)-1} \left(1-\kappa \vrho_{\pi(l)}\right).
\end{equation}
Here $\pi^{-1}(i)$ is the rank of item $\docix$.
Equation \eqref{eq:dbn} has an intuitive interpretation: the user runs through the list and at each rank $l$, the user has a probability $1-\gamma$ of simply abandoning his search and a probability $\kappa\vrho_{\pi(l)}$ of being satisfied with the item $\pi(l)$ and as a result stopping their search.
% An exposure vector is denoted by $\vcE(\pi)$.
In this paper we often omit the heavy notation involving $\gamma,\kappa,\vrho$ and write the exposure vector simply as $\vcE(\pi)=\left(\vcE_1(\pi),\hdots,\vcE_\ndoc(\pi)\right)^\top$.

The \acrshort{dbn} model contains as particular cases many other click models, such as the \acrlong{cm}, the \acrlong{sdbn} model, the \acrlong{dcm} and the \acrlong{ccm} \cite{chuklin_click_2015}, see Appendix \ref{app:click}.
% The relationship between our exposure model and these click models is detailed in Appendix \ref{app:click}.

Given a distribution over rankings $\cD$, the exposure vector of $\cD$ is the expectation
\begin{equation}
    \vcE(\cD) \defeq \E_{\pi\sim\cD}[\vcE(\pi)].
\end{equation}

\paragraph{User side utility}

We define the utility of a ranking as the scalar product of exposure with relevance, as this corresponds to the \acrfull{err} metric \cite{chapelle_expected_2009} that was developed jointly with the \acrshort{dbn} model \cite{chapelle_dynamic_2009}.
Formally we denote

\begin{equation}
    \utility(\pi) \defeq \vrho^\top\vcE(\pi),
\end{equation}
where $\cdot^\top$ is the transpose operator.
Given a distribution over rankings $\cD$, the utility of $\cD$ is
\begin{equation}\label{eq:utility}
    \utility(\cD) = \E_{\pi\sim\cD}[\utility(\pi)].
\end{equation}

In particular the utility is a linear function of the exposure vector.
In fact this definition is analogous to the popular \acrshort{dcg} metric \cite{jarvelin_cumulated_2002} with the difference that a \acrshort{pbm} exposure is used for the \acrshort{dcg}.
Similarly to the relationship that \acrshort{dcg} bears with \acrshort{ndcg}, we define the \defkey{normalized utility} and denote $\nutility$ the utility divided by the utility obtained with a \acrshort{prp} ranking:
\begin{equation}
    \nutility(\pi) \defeq \frac{\utility(\pi)}{\utility(\piprp)}.
\end{equation}
This gives an adimensional metric that is useful to be able to aggregate utilities across different queries in a meaningful way.

\paragraph{(Un)fairness}
We assume that a decision-maker has decided upon a vector of merits $\vmu\in\R^\ndoc$ with non-negative components.
The fact that the vector of merits is a free parameter makes our setting very flexible in terms of covered fairness notions.
This vector of merits can be set to be equal to the relevance vector so as to get a notion of \emph{meritocratic} fairness, or it can be set to $\vone$, so as to get a notion of \emph{demographic} fairness.
An exposure vector $\vcE$ is said to be fair if it is proportional to the vector of merits $\vmu$, \acrshort{ie} if there exists a positive real number $k\in\R_+$ such that $\vcE=k\vmu$.
As we will see in \cref{sec:expohedron}, there exists at most one exposure vector proportional to $\vmu$ such that it is achievable by a distribution over rankings, \acrshort{ie} such that there exists a distribution over rankings $\cD$ with $\vcE(\cD)\propto\vmu$.
% The vector $\vmu$ can have all components equal to each other
% considered \emph{fair}.
When such a feasible vector exists, we denote it by $\vcEt$ and call it the \defkey{target exposure}.
When such a vector does not exist, we relax the proportionality relationship into an affine one, by adding a constant value to every element of $\vmu$ until the corresponding vector is feasible, \acrshort{ie} we use the merit vector $\vmu'=\vmu+K\vone$ for minimal $K$, and we define the target exposure as $\vcEt\propto\vmu'$.
This definition of target exposure is to a certain degree arbitrary and we acknowledge that other definitions are possible such as projecting $\vmu$ on the expohedron with a euclidean distance.

We measure the unfairness of a distribution over rankings as the euclidean distance of its expected exposure to the target exposure:

\begin{equation}\label{eq:unfairness}
    \unfairness(\cD) = \norm{\E_{\pi\sim\cD}[\vcE(\pi)] - \vcEt}_2.
\end{equation}
As for the normalized utility, we define the \defkey{normalized unfairness} as the unfairness divided by the unfairness obtained with a \acrshort{prp} ranking:
\begin{equation}\label{eq:nunfairness}
    \nunfairness(\cD) \defeq \frac{\norm{\E_{\pi\sim\cD}[\vcE(\pi)] - \vcEt}_2}{\norm{\vcE_{\PRP} - \vcEt}_2}.
\end{equation}
This normalization has the advantage of giving a metric with values between $0$ and $1$ (for Pareto-optimal distributions) independently of the number of documents $\ndoc$
% This normalization has among others, the property of giving a metric that is independent from the number of items $\ndoc$ to be ranked.
% We normalize the unfairness in order to obtain a metric that is independent from the number of items $\ndoc$ to be ranked.
% We normalize the unfairness in order to obtain a quantity that expresses the unfairness \acrshort{wrt} a \acrshort{prp} ranking.
Equation \eqref{eq:nunfairness} assumes that the target exposure is always different from the exposure obtained with a \acrshort{prp} ranking.
This is not a very restrictive assumption, because when the two exposure vectors coincide, the problem becomes trivial, as it suffices to deliver a \acrshort{prp} ranking to get both maximal utility and minimal unfairness.

\paragraph{Optimization problem}

The normalized unfairness and the normalized utility are the two objectives of the \acrfull{moo} problem:
\begin{equation}\label{eq:moo}
    \max_{\cD} \nutility(\cD),\quad\min_{\cD} \nunfairness(\cD).
\end{equation}

Our goal is to find a set of distributions $\cD$ of rankings that is Pareto-optimal for the two objectives of \eqref{eq:moo}.
Note that in \eqref{eq:utility} and \eqref{eq:nunfairness}, $\nunfairness$ and $\nutility$ depend on $\cD$ only through the expected value $\vcE(\cD)$.
Therefore it is possible to decompose the \acrshort{moo} \eqref{eq:moo} into two sub-problems:
\begin{enumerate}
    \item Finding all Pareto optimal vectors $\vcE\in\R^\ndoc$ that are the expectation of some distribution $\cD$,
    \item Given a Pareto-optimal exposure vector $\vcE\in\R^\ndoc$, finding a distribution $\cD$ such that $\vcE=\E_{\pi\sim\cD}[\vcE(\pi)]$.
\end{enumerate}

In the remainder of the paper we draw on the ideas developed by \citeauthor{kletti_introducing_2022} \cite{kletti_introducing_2022}, see \cref{sec:rw}.
We first derive several properties of the so-called expohedron resulting from a \acrshort{dbn} exposure model.
We show that these properties make it possible to design a Carathéodory decomposition algorithm in the \acrshort{dbn}-expohedron, thereby making it possible to find a distribution $\cD$ such that $\vcE(\cD)=\vcE$ for any $\vcE$ for which such a distribution exists.
Then we show that using these properties we can recover the whole set of Pareto-optimal exposure vectors for the \acrshort{moo} \eqref{eq:moo}.
Both these algorithms have complexity $O(\ndoc^3)$, where $\ndoc$ is the number of documents to be ranked.

\section{The DBN-expohedron}\label{sec:expohedron}
    In this section we formally define the \acrshort{dbn}-expohedron and provide several properties about its geometry.
In previous work \cite{kletti_introducing_2022}, the expohedron has been defined as the convex hull of the exposure vectors achieved with a ranking, using a \acrshort{pbm}.
In this paper we use the same definition, but using a \acrshort{dbn} exposure model instead of a \acrshort{pbm}.

\paragraph{The \acrshort{dbn}-expohedron}
Given a \acrshort{dbn} exposure model parametrized by $\gamma\in[0,1),~\kappa\in[0,1]$ and given an exposure vector $\vrho\in[0,1]^\ndoc$, we define the \acrshort{dbn}-expohedron as
\begin{equation}
    \Pi(\gamma,\kappa,\vrho) \defeq \Conv\left(\left\{\vcE(\pi,\gamma,\kappa,\vrho)~|~\pi\in\cS_\ndoc\right\}\right).
\end{equation}
The \acrshort{dbn}-expohedron is the convex hull of all exposure vectors $\vcE(\pi)$ with $\pi\in\cS_\ndoc$ and $\cS_\ndoc$ is the set of permutations of size $\ndoc$.
As such it contains exactly those exposure vectors that are expected values of distributions over rankings.
This is why we say that vectors inside the expohedron are \emph{feasible}.

\paragraph{Properties}
% A \acrshort{dbn}-expohedron is a polytope whose vertices are exactly the set of achievable exposure vectors $\vcE(\cS_\ndoc,\gamma,\kappa)$.
% Furthermore each achievable exposure vector is in the zone $Z(\pi)$\footnote{or $Z(\pi^{-1})$ ?}.
% The proofs for these statements can be found in the appendices.

A first important property is that the \acrshort{dbn}-expohedron is actually contained in a hyperplane of $\R^\ndoc$, \acrshort{ie} it is an object of dimension $\ndoc-1$.

\begin{theoremEnd}{prop}\label{prop:normal}
            The polytope $\Pi(\gamma,\kappa,\vrho)$ is contained in a hyperplane with normal vector $\vnu = \vone+\frac{\gamma\kappa}{1-\gamma}\vrho$.
\end{theoremEnd}
\begin{proofEnd}
    The idea is to show that the scalar product
    \begin{equation}\label{eq:nuE}
        \vnu^\top\vcE(\pi) %= \text{constant}
    \end{equation}
    does note change whenever the ranks of two items of adjacent ranks are exchanged.
    All rankings can be obtained by successively exchanging the ranks of adjacent items, starting from any initial ranking.
    Therefore it suffices to show that \eqref{eq:nuE} remains unchanged by the exchange of the ranks of two adjacent elements.
    Given an arbitrary ranking $\pi_0$ the exposures of ranks $\rankix$ and $\rankix+1$ are
    \begin{equation}
        \hdots,\underbrace{K}_{\text{rank } \rankix,\text{ item }\pi_0(\rankix)}, \underbrace{K\gamma(1-\kappa\vrho_{\pi_0(\rankix)})}_{\text{rank } \rankix+1,\text{ item }\pi_0(\rankix+1)},\hdots
    \end{equation}
    for a certain constant $K$.
    When the items at ranks $\rankix$ and $\rankix+1$ are exchanged, to give a new ranking $\pi_1$, these exposures become
    \begin{equation}
        \hdots,\underbrace{K}_{\text{rank }\rankix,\text{ item }\pi_0(\rankix+1)}, \underbrace{K\gamma(1-\kappa\vrho_{\pi_0(\rankix+1)})}_{\text{rank }\rankix+1,\text{ item }\pi_0(\rankix)},\hdots
    \end{equation}
    while the exposures at all other ranks remain unchanged.
    Let us denote $i=\pi_0(\rankix)$ and $j=\pi_0(\rankix+1)$.
    With some algebraic manipulations, it is then easy to show that the scalar products \eqref{eq:nuE} for rankings $\pi_0$ and $\pi_1$ satisfy
        \begin{align}
            \begin{split}
                \vnu^\top\vcE(\pi_1) - \vnu^\top\vcE(\pi_0) &= \left(1+\frac{\gamma\kappa}{1-\gamma}\vrho_i\right)\left(K\gamma(1-\kappa\vrho_j) - K\right)\\
                &\phantom{=}+ \left(1+\frac{\gamma\kappa}{1-\gamma}\vrho_j\right)\left(K-K\gamma(1-\kappa\vrho_i)\right)\\
            &=0.
            \end{split}
        \end{align}
    This means that \eqref{eq:nuE} remains unchanged by exchanging two items of adjacent ranks, and thus that \eqref{eq:nuE} is constant in the whole expohedron.
\end{proofEnd}

From \cref{prop:normal}, it follows that for every vector of merits $\vmu$, there exists at most one feasible expected exposure $\vcEt$ vector such that $\vcEt\propto\vmu$, because the arrow in the direction $\vmu$ intersects the hyperplane at most once.

\begin{rema}
    The fact that the normal vector is not $\vone$, as would be the case for a \acrshort{pbm} expohedron \cite{kletti_introducing_2022}, implies that minimizing $\norm{\vcE}$, as in \cite{diaz_evaluating_2020}, does not lead to equal exposures, as is illustrated in \Cref{fig:expohedron}.
\end{rema}

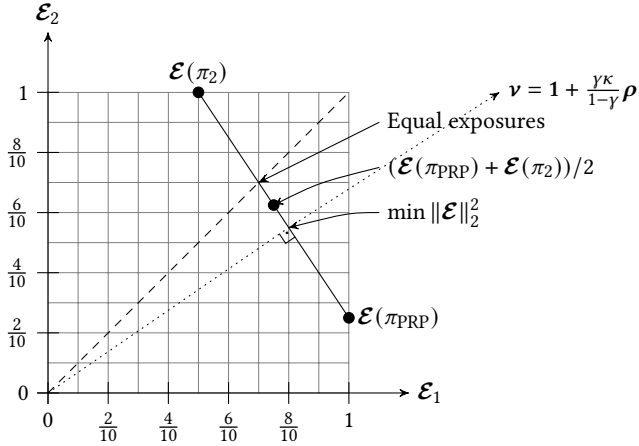
\begin{figure}
    \centering
    \begin{tikzpicture}[scale=4,cap=round]
  % Local definitions
  \def\costhirty{0.8660256}

  % Colors
  \colorlet{anglecolor}{green!50!black}
  \colorlet{sincolor}{red}
  \colorlet{tancolor}{orange!80!black}
  \colorlet{coscolor}{blue}

  % Styles
  \tikzstyle{axes}=[]
  \tikzstyle{important line}=[very thick]
  \tikzstyle{information text}=[rounded corners,fill=red!10,inner sep=1ex]

  % The graphic
  \draw[very thin, step=0.1cm, gray] (0,0) grid (1,1);
  % style=help lines

  \begin{scope}[style=axes]
    \draw[->] (-0.1,0) -- (1.2,0) node[right] {$\vcE_1$};
    \draw[->] (0,-0.1) -- (0,1.2) node[above] {$\vcE_2$};

    \foreach \x/\xtext in {0, 0.2/\frac{2}{10}, 0.4/\frac{4}{10}, 0.6/\frac{6}{10}, 0.8/\frac{8}{10}, 1}
      \draw[xshift=\x cm] (0pt,1pt) -- (0pt,-1pt) node[below,fill=white]
            {$\xtext$};

    \foreach \y/\ytext in {0, 0.2/\frac{2}{10}, 0.4/\frac{4}{10}, 0.6/\frac{6}{10}, 0.8/\frac{8}{10}, 1}
      \draw[yshift=\y cm] (1pt,0pt) -- (-1pt,0pt) node[left,fill=white]
            {$\ytext$};
  \end{scope}

	% \draw[color=red] (1, 0.25) -- (0.5, 1);
	\draw[dashed] (0, 0) -- (1, 1);
	\draw[dotted,->] (0, 0) coordinate (O) -- (8/10, 11/20) coordinate (X) -- (1.5,1) node[align=left, right] {$\vnu=\vone+\frac{\gamma\kappa}{1-\gamma}\vrho$};
	\draw[<-] (8/10, 11/20) .. controls (1,0.6) .. (1.1,0.6) node[align=left, right] {$\min\norm{\vcE}_2^2$};
	\draw[<-] (7/10, 7/10) .. controls (0.9,0.8) .. (1.1,0.9) node[align=left, right] {Equal exposures};

	\filldraw (1,0.25) coordinate (A) circle (0.5pt) node[align=left, right] {$\vcE(\pi_{\PRP})$} --
	(0.5,1) circle (0.5pt) node[align=left, above] {$\vcE(\pi_2)$};
	\filldraw (0.75, 1.25/2) coordinate (M) circle (0.5pt);
	\draw[<-] (M) .. controls (0.9,0.7) .. (1.1,3/4) node[align=left, right] {$(\vcE(\pi_{\PRP}) + \vcE(\pi_2)) / 2$};

	% Right angle
	\draw pic [draw=black,angle eccentricity=1,pic text=$\cdot$,scale=0.3] {right angle = O--X--A};

\end{tikzpicture}
    \caption{A \acrshort{dbn}-expohedron for $\ndoc=2$ documents of relevances $\vrho=(1, 0)$ with parameters $\gamma=\kappa=0.5$.
    It is apparent that minimizing $\norm{\vcE}^2_2$ does not lead to any interpretable notion of equity.}
    \label{fig:expohedron}
\end{figure}

Before further characterizing the faces of our expohedron, let us recall the definition of \emph{zone} $\zone(\pi)$ from \cite{kletti_introducing_2022} as the subset of $\R^\ndoc$ of vectors $\vx$ such that
\begin{equation}
    \vx_{\pi(1)} \geq \hdots \geq \vx_{\pi(\ndoc)}.
\end{equation}

The following proposition gives a characterization of the faces of the \acrshort{dbn}-expohedron that is similar to the characterization in \cite{kletti_introducing_2022}.
However the normal vectors to the faces take a very different expression with a \acrshort{dbn} model.
\begin{theoremEnd}{prop}\label{prop:faces}
    Every face $F$ of $\Pi(\gamma,\kappa,\vrho)$ is characterized by
    \begin{enumerate}
        \item A zone $\zone(\pi)$ with $\pi\in\cS_\ndoc$,
        \item A subset $\splits$ of $\seq{1}{\ndoc}$ called set of \defkey{splits}.
    \end{enumerate}
    We denote a face by $F=(\pi,\splits)$.
    In the basis in which $\pi$ is the identity, the $|\splits|$ normal vectors of a face are given by
    \begin{equation}
        \vnu_s=\vone(s)+\frac{\gamma\kappa}{1-\gamma}(\vrho\odot\vone(s)), \quad\forall s\in\splits,
    \end{equation}
    where
    \begin{equation}
        \vone(s)_{\docix} =
        \begin{cases}
        1\text{, if }\docix \leq s,\\
        0\text{, else },
        \end{cases}
    \end{equation}
    and where $\odot$ is the element-wise product.
    % The offset is given by $\vcE(\pi)\in F$.
    Furthermore the dimension of a face $F=(\pi,\splits)$ is $\dim(F) = \ndoc - |\splits|$.
\end{theoremEnd}
\begin{proofEnd}
    The proof to this theorem is quite complex, we first introduce two lemmas to build an equivalence with a well-known polytope called \emph{permutohedron}, then we show how the result can be obtained as a corollary of \cref{prop:normal}.
    % A face of the \acrshort{dbn}-expohedron is characterized by a partial
    \begin{lemm}
        For every $\pi\in\cS_\ndoc$, the point $\vcE(\pi,\gamma,\kappa,\vrho)$ is in the zone $\zone(\pi)$.
    \end{lemm}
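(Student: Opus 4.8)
The plan is to reduce the claim to a statement about the monotonicity of the coordinates of $\vcE(\pi)$ along the order prescribed by $\pi$. By definition, $\vcE(\pi)\in\zone(\pi)$ means $\vcE_{\pi(1)}(\pi)\geq\vcE_{\pi(2)}(\pi)\geq\cdots\geq\vcE_{\pi(\ndoc)}(\pi)$, i.e. the exposure received by the item placed at rank $k$ is non-increasing in $k$. So the lemma amounts to showing that the scalar sequence $k\mapsto\vcE_{\pi(k)}(\pi,\gamma,\kappa,\vrho)$ is non-increasing.

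The key step is to read a one-step recursion off the closed form \eqref{eq:dbn}. For the item occupying rank $k$ in $\pi$, \eqref{eq:dbn} specializes to $\vcE_{\pi(k)}(\pi)=\gamma^{k-1}\prod_{l=1}^{k-1}(1-\kappa\vrho_{\pi(l)})$, and peeling off the $l=k$ factor (together with one factor of $\gamma$) yields
\[
    \vcE_{\pi(k+1)}(\pi,\gamma,\kappa,\vrho) \;=\; \gamma\bigl(1-\kappa\vrho_{\pi(k)}\bigr)\,\vcE_{\pi(k)}(\pi,\gamma,\kappa,\vrho).
\]
Since $\gamma\in[0,1]$, $\kappa\in[0,1]$ and every $\vrho_{\pi(k)}\in[0,1]$, the multiplier $\gamma(1-\kappa\vrho_{\pi(k)})$ lies in $[0,1]$; because exposures are manifestly non-negative, this immediately gives $\vcE_{\pi(k+1)}(\pi)\leq\vcE_{\pi(k)}(\pi)$ for every $k\in\{1,\dots,\ndoc-1\}$, which is exactly the defining chain of inequalities of $\zone(\pi)$.

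I do not expect a real obstacle here: the only thing to be careful about is that the model parameters $\gamma,\kappa$ and all relevances $\vrho_\docix$ lie in the unit interval, which is part of the standing assumptions of \cref{sec:model}, so the per-step multiplier is genuinely a probability in $[0,1]$; arguing multiplicatively (rather than via ratios) also avoids having to treat the case where some exposure vanishes. Intuitively, the statement just records the fact that a user scanning the list from top to bottom can only reach rank $k+1$ by first examining rank $k$ and then neither abandoning the search nor being satisfied by the item there, so the probability of examining rank $k+1$ — i.e. the exposure of the item placed at rank $k+1$ — cannot exceed that of rank $k$.
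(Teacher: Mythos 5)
Your proof is correct and follows the same route as the paper, which simply asserts that exposure decreases with increasing rank; you merely make the underlying one-step recursion $\vcE_{\pi(k+1)}(\pi)=\gamma(1-\kappa\vrho_{\pi(k)})\vcE_{\pi(k)}(\pi)$ and the fact that the multiplier lies in $[0,1]$ explicit. No gap; your version is just a more detailed write-up of the paper's one-line argument.
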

    \begin{proof}
        We need to show that
        \begin{equation}
            \vcE_{\pi(1)} \geq\hdots\geq \vcE_{\pi(\ndoc)}.
        \end{equation}
        This is clearly the case, because the exposure decreases with increasing rank.
    \end{proof}
    \begin{lemm}
        The vertices of the the \acrshort{dbn}-expohedron $\Pi(\gamma,\kappa,\vrho)$ are exactly the $\ndoc!$ points $\vcE(\pi,\gamma,\kappa,\vrho)$ for $\pi\in\cS_\ndoc$.
        Formally
        \begin{equation}
            \vx\in\Pi(\gamma,\kappa,\vrho)\text{ is a vertex } \iff \exists\pi\in\cS_\ndoc:~ \vx=\vcE(\pi,\gamma,\kappa,\vrho).
        \end{equation}
        % For every $\pi\in\cS_\ndoc$, the point $\vcE(\pi,\gamma,\kappa,\vrho)$ is a vertex of the \acrshort{dbn}-expohedron $\Pi(\gamma,\kappa,\vrho)$.
    \end{lemm}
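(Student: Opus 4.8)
The direction ``$\vx$ is a vertex $\,\Rightarrow\,$ $\vx=\vcE(\pi,\gamma,\kappa,\vrho)$ for some $\pi$'' is immediate: a polytope is the convex hull of its vertices, and $\Pi(\gamma,\kappa,\vrho)$ is by definition the convex hull of the finite set $\{\vcE(\pi):\pi\in\cS_\ndoc\}$, so every vertex belongs to this set. The real work is the converse together with the count: the plan is to show that every $\vcE(\pi)$ is a vertex by exhibiting, for each fixed $\pi$, a linear functional $\vw^\top(\cdot)$ that is \emph{strictly} maximised over $\{\vcE(\sigma):\sigma\in\cS_\ndoc\}$ at the single point $\vcE(\pi)$. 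Such a functional certifies $\vcE(\pi)$ as an exposed point, hence a vertex, of $\Pi(\gamma,\kappa,\vrho)$; and since it has $\vcE(\pi)$ as its \emph{unique} maximiser it simultaneously separates $\vcE(\pi)$ from every other $\vcE(\sigma)$, so the $\ndoc!$ exposure vectors are pairwise distinct. One should note the non-degenerate regime this needs: $\gamma>0$ and $\kappa\vrho_\docix<1$ for all $\docix$; if $\gamma=0$ or $\kappa\vrho_\docix=1$ for some $\docix$ the map $\pi\mapsto\vcE(\pi)$ is no longer injective and there are strictly fewer than $\ndoc!$ distinct exposure vectors.

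For a fixed $\pi$, the functional I would use is $\vw=\diag(\vnu)\,\vec t$, i.e.\ $\vw_\docix=\vnu_\docix\,t_\docix$, where $\vnu=\vone+\frac{\gamma\kappa}{1-\gamma}\vrho$ is the normal vector of \cref{prop:normal} and $\vec t\in\R^\ndoc$ is any vector strictly decreasing along $\pi$, that is $t_{\pi(1)}>t_{\pi(2)}>\dots>t_{\pi(\ndoc)}>0$. The computation I need is the effect of an adjacent transposition, which is exactly the one already performed in the proof of \cref{prop:normal}: if $\sigma$ places item $j$ at rank $\rankix$ and item $i$ at rank $\rankix+1$ and $\sigma'$ is obtained by swapping them, then only the $i$- and $j$-coordinates of the exposure vector change, namely $\vcE_i$ from $K\gamma(1-\kappa\vrho_j)$ to $K$ and $\vcE_j$ from $K$ to $K\gamma(1-\kappa\vrho_i)$, with $K=\gamma^{\rankix-1}\prod_{l<\rankix}(1-\kappa\vrho_{\sigma(l)})>0$. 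Substituting this, together with the identity $(1-\gamma)\vnu_\docix=1-\gamma+\gamma\kappa\vrho_\docix$, into $\vw^\top\vcE(\sigma')-\vw^\top\vcE(\sigma)$, the relevance-dependent factors cancel (routine algebra) and leave the clean identity
\begin{equation}
    \vw^\top\vcE(\sigma')-\vw^\top\vcE(\sigma)=K\,(1-\gamma)\,\vnu_i\,\vnu_j\,(t_i-t_j),
\end{equation}
whose sign is that of $t_i-t_j$ since $K>0$ and $\vnu_i,\vnu_j>0$.

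I would conclude with a bubble-sort argument. If $\sigma\neq\pi$, the sequence $\bigl(\pi^{-1}(\sigma(1)),\dots,\pi^{-1}(\sigma(\ndoc))\bigr)$ is a non-identity permutation of $\{1,\dots,\ndoc\}$, hence has an adjacent descent: there are consecutive ranks $\rankix,\rankix+1$ with $\sigma(\rankix)=j$, $\sigma(\rankix+1)=i$ and $\pi^{-1}(i)<\pi^{-1}(j)$, so $t_i>t_j$. By the identity above, swapping this pair strictly increases $\vw^\top\vcE$; iterating (ordinary bubble sort toward $\pi$) reaches $\pi$ after finitely many such strictly increasing steps, so $\vw^\top\vcE(\sigma)<\vw^\top\vcE(\pi)$. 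Hence $\vcE(\pi)$ is the unique maximiser of $\vw^\top(\cdot)$ over $\Pi(\gamma,\kappa,\vrho)$, which makes it a vertex, and the $\ndoc!$ points $\vcE(\pi)$ are pairwise distinct.

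The part I expect to be the genuine obstacle is finding this functional. Unlike for the \acrshort{pbm}-expohedron, simply taking $\vec t$ strictly decreasing along $\pi$ does \emph{not} work, because under a \acrshort{dbn} the multiset of exposure values itself depends on the permutation, so the usual rearrangement-inequality argument collapses; twisting $\vec t$ by $\diag(\vnu)$ is precisely what makes the relevance terms in the transposition formula cancel. The only other place needing care is the degenerate boundary of the parameter domain flagged above.
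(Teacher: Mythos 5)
Your proof is correct, but it takes a genuinely different route from the paper's. The paper argues directly from the definition of a vertex as a point that is not a proper convex combination of the other generators: writing $\vcE(\pi)=\sum_i\alpha_i\vcE(\pi_i)$, it observes that the rank-$1$ exposure equals $1$ and is the strict maximum attainable value, forcing every $\pi_i$ to agree with $\pi$ at rank $1$, and then recurses down the ranks (at each step the exposure at the next rank is the largest value still attainable given the fixed prefix), concluding that all $\pi_i=\pi$. You instead exhibit an explicit exposing functional $\vw=\diag(\vnu)\,\vec t$ and verify via the adjacent-transposition identity $\vw^\top\vcE(\sigma')-\vw^\top\vcE(\sigma)=K(1-\gamma)\vnu_\docix\vnu_j(t_\docix-t_j)$ plus bubble sort that $\vcE(\pi)$ is its unique maximiser; I checked the algebra and it is right. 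Your approach costs more computation up front but buys three things the paper's argument does not deliver: it produces the supporting hyperplanes themselves (which dovetails with the face description in \cref{prop:faces} and could replace part of that proof), it actually establishes the pairwise distinctness of the $\ndoc!$ exposure vectors (the ``exactly $\ndoc!$ points'' clause, which the paper's convex-combination argument never addresses), and it makes explicit the non-degeneracy hypotheses $\gamma>0$ and $\kappa\vrho_\docix<1$ under which the count holds --- the paper's own recursion also silently needs these (its ``same reasoning'' step fails when the remaining exposures are all $0$), so flagging them is a genuine improvement rather than pedantry. Your observation that the untwisted functional $\vec t$ fails for the DBN (one gets the condition $t_\docix/\vnu_\docix>t_j/\vnu_j$ rather than $t_\docix>t_j$) correctly identifies why the PBM rearrangement argument does not transfer and why the $\diag(\vnu)$ twist is the right fix.
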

    \begin{proof}~\\
        \begin{itemize}
            \item "$\implies$" Every vertex of $\Pi(\gamma,\kappa,\vrho)$ is a point of type $\vcE(\pi,\gamma,\kappa,\vrho)$, because $\Pi(\gamma,\kappa,\vrho)$ is the convex hull of these points.
            \item "$\impliedby$" To prove that a point in a polytope is a vertex, it suffices to show that it cannot be expressed as a convex combination of other vertices.
            Let $\pi\in\cS_\ndoc$ such that
            \begin{equation}
                \vcE(\pi,\gamma,\kappa,\vrho) = \sum_{i=1}^N \alpha_i\vcE(\pi_i,\gamma,\kappa,\vrho),
            \end{equation}
            with $\sum_{i=1}^N\alpha_i=1$ and $\alpha_i>0$ for all $i\in\seq{1}{N}$.
            One can show by recursion over $k\in\seq{1}{\ndoc}$, that
            \begin{equation}
                \vcE_{\pi(k)}(\pi_i,\gamma,\kappa,\vrho) = \vcE_{\pi(k)}(\pi,\gamma,\kappa,\vrho),~\forall i\in\seq{1}{N}.
            \end{equation}
            The highest component of $\vcE(\pi,\gamma,\kappa,\vrho)$ is the number $1$ at index $\pi(1)$, so all other $\vcE(\pi_i,\gamma,\kappa,\vrho)$ must also have a $1$ at index $\pi(1)$.
            If all $\vcE(\pi_i,\gamma,\kappa,\vrho)$ have equal components at indices $\pi(1),\hdots,\pi(k)$, then by the same reasoning, they must have equal components at indices $\pi(k+1)$.
            By recursion, they have equal indices at all components, which concludes the proof.
        \end{itemize}
    \end{proof}
    From previous lemmas we can conclude that $\Pi(\gamma,\kappa,\vrho)$ has the same combinatorial structure as the \acrshort{pbm}-expohedron \cite{kletti_introducing_2022} and as the permutohedron.
    Therefore a face of $\Pi(\gamma,\kappa,\vrho)$ is composed of vertices defined by a partial ordering of the items.
    Such a partial ordering can be defined by a permutation $\pi\in\cS_\ndoc$ and a set of splits $\splits\subseteq\seq{1}{\ndoc}$.
    Given a permutation $\pi\in\cS_\ndoc$ and a set of splits $\{s_1,\hdots,s_N\}\subseteq\seq{1}{\ndoc}$ with $s_N=\ndoc$, the partial ordering is such that, when working in the basis of $\R^\ndoc$ in which $\pi$ is the identity, we have
    \begin{equation}
        \vcE(\pi')_1,\hdots,\vcE(\pi')_{s_1} \geq \hdots \geq \vcE(\pi')_{s_{N-1}+1},\hdots,\vcE(\pi')_{s_N},
    \end{equation}
    for every vertex $\vcE(\pi')$ of the face.
    In particular, if we ignore the ranks $s_{k}+1,\hdots,\ndoc$ for $k\in\seq{1}{N}$, we can apply \cref{prop:normal} to the ranks $s_1,\hdots,s_{k}$ and get
    \begin{equation}
        \vnu_{s_k}=\vone(s_k)+\frac{\gamma\kappa}{1-\gamma}(\vrho\odot\vone(s_k))
    \end{equation}
    as a normal vector to the face, which is what there was to prove.
\end{proofEnd}
% The splits represent

With \cref{prop:faces} it becomes possible to check whether an arbitrary point $\vx\in\R^\ndoc$ is inside $\Pi(\gamma,\kappa,\vrho)$ using \cref{alg:is_inside}.
This non-trivial result is necessary in \cref{sec:methods} to design our Carathéodory decomposition and to find the Pareto-front of our \acrshort{moo} problem.
% This is not a trivial thing, because our polytope has $2^\ndoc − 2$ facets,

\begin{algorithm}
	\caption{Check whether a point $\vx\in\R^\ndoc$ is inside $\Pi(\gamma,\kappa,\vrho)$}\label{alg:is_inside}
	\begin{algorithmic}[1]
		\Procedure{\texttt{is\_inside}}{\textsc{Input}: $\vx\in\R^\ndoc,\Pi(\gamma,\kappa,\vrho)$}
		\State $\pi \gets \argsort(\vx)$ \Comment{Identify the zone of $\vx$}
		\State $\vv \gets \vcE(\pi)$ \Comment{Create the zone's vertex}
		\State $\texttt{is\_inside} \gets \texttt{True}$
		\If{$\vnu^\top(\vx - \vv) \neq 0$}\label{l:check_expo}
		        \State $\texttt{is\_inside} \gets \texttt{False}$
		    \EndIf
		\For{$s\in\seq{1}{\ndoc-1}$}
		    \If{$\vnu_s^\top(\vx - \vv) > 0$}\label{l:check_facets}
		        \State $\texttt{is\_inside} \gets \texttt{False}$
		    \EndIf
		\EndFor
		\EndProcedure~\textsc{Output}: $\texttt{is\_inside}$
	\end{algorithmic}
\end{algorithm}
\begin{theoremEnd}{lemm}
    \Cref{alg:is_inside} returns \texttt{True} if and only if $\vx\in\Pi(\gamma,\kappa,\vrho)$.
    It operates with complexity $O(\ndoc^2)$.
\end{theoremEnd}
\begin{proofEnd}
    The expohedron can be partitioned into $\ndoc!$ pieces, each of which is contained by a zone.
    A point $\vx$ in a zone $\zone(\pi)$ is in the expohedron if and only if it is in the part of the expohedron contained by $\zone(\pi)$.
    Within a zone, the expohedron is delimited by $\ndoc$ hyperplanes: one hyperplane corresponds to the hyperplane of the whole expohedron (checked in line \ref{l:check_expo}) and $\ndoc-1$ other hyperplanes corespond to the $\ndoc-1$ facets adjacent to the only vertex of zone $\zone(\pi)$ (checked in line \ref{l:check_facets}).

    The determination of the zone of $\vx$ has complexity $O(\ndoc\log(\ndoc))$.
    In the loop over $s\in\seq{1}{\ndoc-1}$, each scalar product has complexity $O(\ndoc)$.
    Therefore the complexity of the whole algorithm is $O(\ndoc^2)$.
\end{proofEnd}

With \cref{prop:faces} it also becomes possible to identify the lowest-dimensional face a point of the expohedron is contained in, using \cref{alg:face_id}.
% This again is a tool that is necessary to the working of Algorithms \ref{alg:gls} and \ref{alg:pareto}.
% Indeed, it suffices to check the zone in which a point $\vx$ is contained, which can be done with a sort operation, and to

\begin{algorithm}
	\caption{Identify the smallest face in which a point is contained}\label{alg:face_id}
	\begin{algorithmic}[1]
		\Procedure{\texttt{face\_id}}{\textsc{Input}: $\vx\in\Pi(\gamma,\kappa,\vrho)$}
		\State $\pi \gets \argsort(\vx)$ \Comment{Identify the zone of $\vx$}
		\State $\vv \gets \vcE(\pi)$ \Comment{Create the zone's vertex}
		\State $\splits \gets \varnothing$
		\For{$s\in\seq{1}{\ndoc}$}
		    \If{$\vnu_s^\top(\vv-\vx) = 0$}
		        \State $\splits \gets \splits \cup \{s\}$
		    \EndIf
		\EndFor
		\EndProcedure~\textsc{Output}: The face $F = (\pi,\splits)$
	\end{algorithmic}
\end{algorithm}
\begin{theoremEnd}{lemm}
    \Cref{alg:face_id} finds the smallest face in which a point $\vx\in\Pi(\gamma,\kappa,\vrho)$ is contained.
    It operates with complexity $O(\ndoc^2)$.
\end{theoremEnd}
\begin{proofEnd}
    Once we know that a point $\vx$ is in the zone $\zone(\pi)$, it can be in at most $\ndoc-1$ facets of the expohedron at once, because there are $\ndoc-1$ facets adjacent to each unique vertex of a zone.
    Since every face of the expohedron is in the intersection of a certain number of facets belonging to the same zone, it suffices to check to which facets of zone $\zone(\pi)$, the point $\vx$ belongs in order to know to which faces he belongs.
    The lowest-dimensional such face is the one with the most normal vectors, which is what the algorithm finds.
    The complexity of the initial zone determination is $O(\ndoc\log(\ndoc)$, and the loop performs $\ndoc$ scalar products of vectors of size $\ndoc$, so the total complexity is $O(\ndoc^2)$.
\end{proofEnd}

Furthermore with \cref{prop:faces} it becomes possible to project vectors on the lowest-dimensional subspace containing a certain face $F=(\pi,\splits)$, since all normal vectors are known.
This can be done by building an orthonormal projection matrix from the set $\{\vnu_s~|~s\in\splits\}$ using for instance the Gram-Schmidt orthonormalization process.

Thanks to these results, we can design a Carathéodory decomposition algorithm in the expohedron and find the set of Pareto-optimal solutions to \eqref{eq:moo}, as we will see in \cref{sec:methods}.

\section{Our proposed algorithms}\label{sec:methods}
    \subsection{Carathéodory decomposition}
In order to be able to express any feasible target exposure as the expected exposure of a distribution over rankings, we need a method able to express any point inside the \acrshort{dbn}-expohedron as a convex combination of its vertices.
Such a combination that uses at most $\ndoc$ vertices is called a \emph{Carathéodory decomposition}, after a famous theorem of Carathéodory \cite{caratheodory_uber_1907,naszodi_perron_2019}.

In this section we expose such a method by showing how one can adapt the generic \acrshort{gls} method, named after Grötschel, Lovász and Schrijver \cite{grotschel_geometric_1993} to the \acrshort{dbn}-expohedron.
The \acrshort{gls} method is a generic Carathéodory decomposition algorithm that needs to be adapted to the particular polytope on which it is applied.
In recent work \cite{kletti_introducing_2022} the \acrshort{gls} method was adapted to the \acrshort{pbm}-expohedron, but does not naturally extend to the \acrshort{dbn}-expohedron, for which such a decomposition algorithm remains to be found.

The two steps in the \acrshort{gls} method that need adaptation are the following:
\begin{enumerate}
    \item\label{enum:intersection} We need to find the intersection of a half-line starting from inside the polytope, with the border of the polytope.
    \item\label{enum:vertex_determination} Given a point on a face of the polytope, we need to find a method able to find a vertex of the same face.
\end{enumerate}

\begin{figure}
    \centering
    %>>> expohedron
% DBN expohedron:
% 	relevance_vector of length 3 :
% 	[0.1 0.5 0.9]
% 	gamma = 0.5
% 	kappa = 0.7

\begin{tikzpicture}%
	[x={(-0.697158cm, -0.389674cm)},
	y={(0.716917cm, -0.378888cm)},
	z={(-0.000040cm, 0.839404cm)},
	scale=3.000000,
	back/.style={loosely dotted, thin},
	edge/.style={color=black, thick},
	facet/.style={fill=white,fill opacity=0.300000},
	vertex/.style={inner sep=1pt,circle,draw=red!25!black,fill=red!75!black,thick}]
%
%
%% This TikZ-picture was produce with Sagemath version 9.3
%% with the command: ._tikz_2d_in_3d and parameters:
%% view = [-0.197800000000000, -0.468300000000000, -0.861100000000000]
%% angle = 140.020000000000
%% scale = 3
%% edge_color = black
%% facet_color = white
%% opacity = 0.300000000000000
%% vertex_color = red
%% axis = False

%% Coordinate of the vertices:
%%
\coordinate (1.00000, 0.46500, 0.15113) at (1.00000, 0.46500, 0.15113);
\coordinate (1.00000, 0.08603, 0.46500) at (1.00000, 0.08603, 0.46500);
\coordinate (0.32500, 1.00000, 0.15113) at (0.32500, 1.00000, 0.15113);
\coordinate (0.06012, 1.00000, 0.32500) at (0.06012, 1.00000, 0.32500);
\coordinate (0.18500, 0.08603, 1.00000) at (0.18500, 0.08603, 1.00000);
\coordinate (0.06012, 0.18500, 1.00000) at (0.06012, 0.18500, 1.00000);

%% Other coordinates
\coordinate (O) at (2, 2, 2);

%%
%%
%% Drawing the interior
%%
\fill[facet] (0.06012, 0.18500, 1.00000) -- (0.06012, 1.00000, 0.32500) -- (0.32500, 1.00000, 0.15113) -- (1.00000, 0.46500, 0.15113) -- (1.00000, 0.08603, 0.46500) -- (0.18500, 0.08603, 1.00000) -- cycle {};
%%
%%
%% Drawing edges
%%
\draw[edge] (1.00000, 0.46500, 0.15113) -- (1.00000, 0.08603, 0.46500);
\draw[edge] (1.00000, 0.46500, 0.15113) -- (0.32500, 1.00000, 0.15113);
\draw[edge] (1.00000, 0.08603, 0.46500) -- (0.18500, 0.08603, 1.00000);
\draw[edge] (0.32500, 1.00000, 0.15113) -- (0.06012, 1.00000, 0.32500);
\draw[edge] (0.06012, 1.00000, 0.32500) -- (0.06012, 0.18500, 1.00000);
\draw[edge] (0.18500, 0.08603, 1.00000) -- (0.06012, 0.18500, 1.00000);
%%
%%
%% Drawing the vertices
%%
\node[vertex, label=left:$\vv_1$] (V1) at (1.00000, 0.46500, 0.15113)     {};
\node[vertex] at (1.00000, 0.08603, 0.46500)     {};
\node[vertex] at (0.32500, 1.00000, 0.15113)     {};
\node[vertex, label=right:$\vv_3$] (V3) at (0.06012, 1.00000, 0.32500)     {};
\node[vertex] at (0.18500, 0.08603, 1.00000)     {};
\node[vertex, label=right:$\vv_2$] (V2) at (0.06012, 0.18500, 1.00000)     {};

\coordinate (D) at (0.48002068, 0.48002068, 0.48002068);
\coordinate (P1) at (0.060125  , 0.49215023, 0.74561177);
\coordinate (P11) at (-0.03995864,  0.49504136,  0.80891636);

\draw[->,red, thick] (V1) -- (P11);
\draw[->,red, thick] (V2) -- (V3);
\filldraw [black] (P1) circle (0.3pt) node[right, black] {$\vp$};
\filldraw [black] (D) circle (0.3pt) node[right, black] {$\vx$};

% \filldraw [black] (A) circle (0.3pt) node[right, black] {$\vx=\vp_1$};

% Zone subdivision
% \draw [dotted] (1.5, 1.5, 3) -- (O);
% \draw [dotted] (3, 1.5, 1.5) -- (O);
% \draw [dotted] (1.5, 3, 1.5) -- (O);
% \draw [dotted] (2.5, 2.5, 1) -- (O);
% \draw [dotted] (1, 2.5, 2.5) -- (O);
% \draw [dotted] (2.5, 1, 2.5) -- (O);

\end{tikzpicture}
    \caption{A \acrshort{dbn}-expohedron for $\ndoc=3$ documents with $\vrho=(0.1, 0.5, 0.9)$, $\gamma=0.5$ and $\kappa=0.7$.
    The red arrows illustrate the \acrshort{gls} procedure for the Carathéodory decomposition of point $\vx$.
    The point $\vx$ is decomposed as a convex sum of vertices $\vv_1,\vv_2,\vv_3$.}
    \label{fig:gls}
\end{figure}
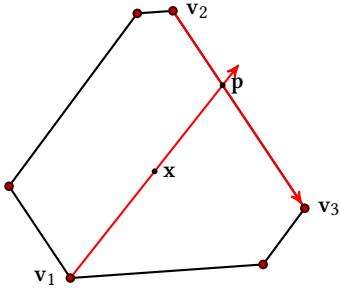

Step (\ref{enum:intersection}) can be solved with a bisection search, provided one has a method for checking whether a point is inside the polytope or not \cite{kletti_introducing_2022}.
For a \acrshort{pbm}, this can be easily checked using the so-called majorization criterion \cite{kletti_introducing_2022,marshall_inequalities_2011}.
For the \acrshort{dbn} we can use \cref{alg:is_inside}.

Step (\ref{enum:vertex_determination}) is much easier.
Given a point $\vv\in\zone(\pi)$ on a face $F$ of the expohedron, the vertex $\vcE(\pi)$ is also on the face $F$, because there is only one vertex in each zone.
Therefore Step \ref{enum:vertex_determination} has complexity $O(\ndoc\log(\ndoc))$.

\Cref{alg:gls} gives the pseudo-code of our Carathéodory decomposition in the \acrshort{dbn}-expohedron.

\begin{algorithm}
	\caption{Our Carathéodory decomposition in the \acrshort{dbn}-expohedron}\label{alg:gls}
	\begin{algorithmic}[1]
		\Procedure{\texttt{GLS}}{\textsc{Input}: $\vx\in\Pi(\gamma,\kappa,\vrho)$}
		    \State\label{l:initial_vertex} $\vv_1\gets\vcE(\argsort(\vx))$ \Comment{Choose the initial vertex}
		    \State $\alpha_1 \gets 1$ \Comment{Set the initial vertex's weight to 1}
		    \State $\vp_1 \gets \vx$
		    \For{$i\in\seq{1}{\ndoc}$}
    % 			\State $\vp_{i+1} \gets \max_{\lambda\geq1}\vv_i+\lambda(\vp_{i+1}-\vv_i) \text{~s.t.~} \vv_i+\lambda(\vp_{i+1}-\vv_i)\in$
                \State $\vp_{i+1} \gets \max\{\lambda>0~|~\vx + \lambda(\vx-\vv_i)\in\Pi(\gamma,\kappa,\vrho))\}$ \Comment{Find the intersection with the border of the expohedron using a bisection search with \texttt{is\_inside}}
                % \State $\vp_{i+1} \gets \text{Bisection}(\max_{\lambda\geq0}\vx + \lambda(\vx-\vv_i)\in\Pi(\gamma,\kappa,\vrho))$ \Comment{Find the intersection with the border of the expohedron using \texttt{is\_inside}}
    			\State $\alpha_{i+1} \gets \alpha_i - \frac{\norm{\vp_i-\vp_{i+1}}}{\norm{\vp_{i+1} - \vv_i}} \alpha_i$ \Comment{Update convex coefficients}
                \State $\alpha_i \gets \frac{\norm{\vp_i-\vp_{i+1}}}{\norm{\vp_{i+1} - \vv_i}} \alpha_i$
                \State $\vv_{i+1} \gets \vcE(\argsort(\vp_{i+1}))]$
    		\EndFor
		\EndProcedure~\textsc{Output}: $\alpha_1,\hdots,\alpha_\ndoc,\vv_1,\hdots,\vv_\ndoc$
	\end{algorithmic}
\end{algorithm}
\begin{theoremEnd}{theo}
    \Cref{alg:gls} finds a Carathéodory decomposition of any point in the \acrshort{dbn}-expohedron with complexity $O(\ndoc^3)$
\end{theoremEnd}
\begin{proofEnd}
    \Cref{alg:gls} implements the \acrshort{gls}, which is proven to yield the Carathéodory decomposition of any point inside the polytope on which it is applied \cite{grotschel_geometric_1993}.
    The most complex operation in the loop has complexity $O(\ndoc^2)$ so the total complexity of the algorithm is $O(\ndoc^3)$.
\end{proofEnd}

\subsection{Pareto-front identification}

Our \acrshort{moo} problem \eqref{eq:moo} can be expressed in the expohedron as
\begin{equation}\label{eq:mooE}
    \max_{\vcE\in\Pi} \vrho^\top\vcE ,\quad \min_{\vcE\in\Pi} \norm{\vcE - \vcEt}_2.
\end{equation}

One objective is linear, the other objective is quadratic.
We can apply a similar procedure as in \cite{kletti_introducing_2022} to geometrically build the Pareto-frontier.
We start from one extremity, the point $\vcEt$, the only one for which the unfairness is minimal.
The level curves of the unfairness are hyperspheres centered in $\vcEt$, so this objective deteriorates in the same manner in every direction.
Because of this, the direction in which we move from here is the one where we increase utility the most within the constraint of staying inside the expohedron.
This direction is the projection of the vector $\vrho$ onto the hyperplane containing the expohedron.
This direction is followed until the border of the expohedron is reached: we are on a face of the expohedron.
Then the same reasoning can be repeated on the current face of the expohedron and we need to follow the direction of the projection of $\vrho$ onto the hyperplane containing the current face.
Each face has at least one less dimension than the previous one, so we are sure to end up at the \acrshort{prp} vertex or at a point on a face with maximal utility
% This prodecure is done the utility becomes maximal.

\begin{figure}
    \centering
    %>>> expohedron
% DBN expohedron:
% 	relevance_vector of length 3 :
% 	[0.1 0.5 0.9]
% 	gamma = 0.5
% 	kappa = 0.7

\begin{tikzpicture}%
	[x={(-0.697158cm, -0.389674cm)},
	y={(0.716917cm, -0.378888cm)},
	z={(-0.000040cm, 0.839404cm)},
	scale=3.000000,
	back/.style={loosely dotted, thin},
	edge/.style={color=black, thick},
	facet/.style={fill=white,fill opacity=0.300000},
	vertex/.style={inner sep=1pt,circle,draw=red!25!black,fill=red!75!black,thick},
	extended line/.style={shorten >=-#1,shorten <=-#1},
    extended line/.default=2cm]
%
%
%% This TikZ-picture was produce with Sagemath version 9.3
%% with the command: ._tikz_2d_in_3d and parameters:
%% view = [-0.197800000000000, -0.468300000000000, -0.861100000000000]
%% angle = 140.020000000000
%% scale = 3
%% edge_color = black
%% facet_color = white
%% opacity = 0.300000000000000
%% vertex_color = red
%% axis = False

%% Coordinate of the vertices:
%%
\coordinate (1.00000, 0.46500, 0.15113) at (1.00000, 0.46500, 0.15113);
\coordinate (1.00000, 0.08603, 0.46500) at (1.00000, 0.08603, 0.46500);
\coordinate (0.32500, 1.00000, 0.15113) at (0.32500, 1.00000, 0.15113);
\coordinate (0.06012, 1.00000, 0.32500) at (0.06012, 1.00000, 0.32500);
\coordinate (0.18500, 0.08603, 1.00000) at (0.18500, 0.08603, 1.00000);
\coordinate (0.06012, 0.18500, 1.00000) at (0.06012, 0.18500, 1.00000);

%% Other coordinates
% rho_direction = -0.43802974, -0.05320969,  0.33161037
\coordinate (O) at (2, 2, 2);
\coordinate (D) at (0.48002068, 0.48002068, 0.48002068);
\coordinate (P1) at (0.060125  , 0.49215023, 0.74561177);
\coordinate (P11) at (-0.03995864,  0.49504136,  0.80891636);
\coordinate (V0) at (0.48002068, 0.48002068, 0.48002068);
\coordinate (V1) at (0.060125  , 0.42901383, 0.79790266);
\coordinate (V2) at (0.06012, 0.18500, 1.00000);
\coordinate (V2-rhosur2) at (0.17401273, 0.19883452, 0.9137813);
% \filldraw[black] (V2-rhosur2) circle (0.3pt);
\coordinate (V2-07rho) at (0.36674582, 0.22224678, 0.76787274);
% \filldraw[black] (V2-07rho) circle (0.3pt);
\coordinate (V2-15rho) at (0.71716962, 0.26481453, 0.50258444);
% \filldraw[black] (V2-15rho) circle (0.3pt);

\coordinate (rhoarrow) at (0.160125, 0.685   , 1.9     );
\coordinate (rholong) at (-0.39603881,  0.37360131,  1.14324142);
% \filldraw[black] (rholong) circle (0.3pt);
% local basis
\coordinate (X1) at (1,-1,0);
\coordinate (Y1) at (-1/1.732,-1/1.732,2/1.732);

%%
%%
%% Drawing the interior
%%
\fill[facet] (0.06012, 0.18500, 1.00000) -- (0.06012, 1.00000, 0.32500) -- (0.32500, 1.00000, 0.15113) -- (1.00000, 0.46500, 0.15113) -- (1.00000, 0.08603, 0.46500) -- (0.18500, 0.08603, 1.00000) -- cycle {};
%%
%%
%% Drawing edges
%%
\draw[edge] (1.00000, 0.46500, 0.15113) -- (1.00000, 0.08603, 0.46500);
\draw[edge] (1.00000, 0.46500, 0.15113) -- (0.32500, 1.00000, 0.15113);
\draw[edge] (1.00000, 0.08603, 0.46500) -- (0.18500, 0.08603, 1.00000);
\draw[edge] (0.32500, 1.00000, 0.15113) -- (0.06012, 1.00000, 0.32500);
\draw[edge] (0.06012, 1.00000, 0.32500) -- (0.06012, 0.18500, 1.00000);
\draw[edge] (0.18500, 0.08603, 1.00000) -- (0.06012, 0.18500, 1.00000);
%%
%%
%% Drawing the vertices
%%
\node[vertex] at (1.00000, 0.46500, 0.15113)     {};
\node[vertex] at (1.00000, 0.08603, 0.46500)     {};
\node[vertex] at (0.32500, 1.00000, 0.15113)     {};
\node[vertex] at (0.06012, 1.00000, 0.32500)     {};
\node[vertex] at (0.18500, 0.08603, 1.00000)     {};
\node[vertex] (PRP) at (0.06012, 0.18500, 1.00000)     {};
%%
%%

% \draw[->,red, thick] (V1) -- (P11);
% \draw[->,red, thick] (V2) -- (V3);
% \filldraw [black] (P1) circle (0.3pt) node[right, black] {$\vp$};
% \filldraw [black] (D) circle (0.3pt) node[right, black] {$\vx$};

% \filldraw [black] (A) circle (0.3pt) node[right, black] {$\vx=\vp_1$};

% Zone subdivision
% \draw [dotted] (1.5, 1.5, 3) -- (O);
% \draw [dotted] (3, 1.5, 1.5) -- (O);
% \draw [dotted] (1.5, 3, 1.5) -- (O);
% \draw [dotted] (2.5, 2.5, 1) -- (O);
% \draw [dotted] (1, 2.5, 2.5) -- (O);
% \draw [dotted] (2.5, 1, 2.5) -- (O);

% objectives
\draw[->,blue] (PRP) -- (rhoarrow);
% \draw[->,red] ($O+3*rho$) -- ($3*rho$);
% \draw (O) circle [x radius=0.1, y radius=0.1];
\draw[thick, red] (D) [x={(X1)},y={(Y1)}] circle (0.198);
\draw[thick, red] (D) [x={(X1)},y={(Y1)}] circle (0.367);
\draw[thick, red] (D) [x={(X1)},y={(Y1)}] circle (0.505);

%draw purpendiculars
\draw [extended line,blue] ($(D)!(V2-rhosur2)!(rholong)$) -- (V2-rhosur2);
\draw [extended line,blue] ($(D)!(V2-07rho)!(rholong)$) -- (V2-07rho);
\draw [extended line,blue] ($(D)!(V2-15rho)!(rholong)$) -- (V2-15rho);
\draw [extended line,blue] ($(D)!(V2)!(rholong)$) -- (V2);
% % \draw [extended line,blue] ($(O)!(V3-1rho)!(rholong)$) -- (V3-1rho);
% \draw [extended line,blue] ($(O)!(V3-2rho)!(rholong)$) -- (V3-2rho);
% % \draw [extended line,blue] ($(O)!(V3-3rho)!(rholong)$) -- (V3-3rho);
% \draw [extended line,blue] ($(O)!(V3-4rho)!(rholong)$) -- (V3-4rho);
% \draw [extended line,blue] ($(O)!(V3-xrho)!(rholong)$) -- (V3-xrho);

\draw[-,green,line width=0.5mm] (V0) --  (V1) node[right,black] {$\vv_1$} -- (V2) node[right, black] {$\vv_2$};
\filldraw[black] (V0) circle (0.3pt) node[left, black] {$\vv_0$};
\filldraw[black] (V1) circle (0.3pt);

\end{tikzpicture}
    \caption{Illustration of \cref{alg:pareto} on a \acrshort{dbn}-expohedron with $\vrho=(0.1,0.5,0.9)$, $\gamma=0.5$ and $\kappa=0.7$. The red circles are the level curves of the fairness objective, the blue lines are the level curves of the utility objective. The blue arrow is the the direction of steepest ascent for the utility objective: the vector $\vrho$ projected on the expohedron's hyperplane. The green curve is the set of Pareto-optimal exposure vectors for demographic fairness. The point $\vv_0$ is the point of pure fairness, where all exposures are identical and the point $\vv_2$ corresponds to a \acrshort{prp} ranking}
    \label{fig:pareto_expohedron}
\end{figure}

\begin{algorithm}
	\caption{An algorithm able to build the Pareto from of \acrshort{moo} problem \eqref{eq:mooE} in the \acrshort{dbn}-expohedron}\label{alg:pareto}
	\begin{algorithmic}[1]
		\Procedure{\texttt{Pareto}}{\textsc{Input}: Expohedron $\Pi(\gamma,\kappa,\vrho)$, target exposure $\vv^{(0)} = \vcEt$}
			\State $F=(\pi,\splits) \gets \texttt{face\_id}(\vv^{(0)})$ \Comment{Initialize splits}
			\State $P \gets \texttt{Gram-Schmidt}(\{\vnu_s~|~s\in\splits\})$ \Comment{Build the projection matrix $P$ on the linear subspace of $F$}
			\State $\vrho^{(0)} \gets \vrho - P\vrho$ \Comment{Project $\vrho$ on the subspace}
            \State $l \gets 0$
			\While{$\vrho^\top\vv^{(l)} < \vrho^\top\vcE(\piprp)$} \Comment{While utility is not maximal}\label{l:convergence}
    			\State $k \gets \max\{k\geq0~|~\vv^{(l)} + k \vrho^{(l)}\in\Pi\}$ \Comment{Do a bisection search with \texttt{is\_inside}}
    			\State $\vv^{(l+1)} \gets \vv^{(l)} + k \vrho^{(l)}$ %\Comment{Compute the intersection with the border of the expohedron}
				\State $F=(\pi,\splits) \gets \texttt{face\_id}(\vv^{(l+1)})$ %\Comment{Identify the face of the new point}
				\State $P \gets \texttt{Gram-Schmidt}(\{\vnu_s~|~s\in\splits\})$ \Comment{Build the projection matrix on the linear subspace of the new face $F$}
				\State $\vrho^{(l+1)} \gets \vrho - P\vrho$  \Comment{Project $\vrho$ on the new face}
				\State $l\gets l + 1$
			\EndWhile
		\EndProcedure\\\textsc{Output}: A sequence of at most $\ndoc$ points $(\vv^{(j)})_{j\in\seq{0}{l}}$.
		The Pareto-curve is the union of line segments of endpoints $[\vv^{(j)},\vv^{(j+1)}]$.
	\end{algorithmic}
\end{algorithm}
\begin{theoremEnd}{theo}\label{theo:pareto}
    \Cref{alg:pareto} finds all Pareto-optimal exposure vectors in the expohedron $\Pi(\gamma,\kappa,\vrho)$.
\end{theoremEnd}
\begin{proofEnd}
    The fact that the algorithm builds the Pareto front is sufficiently explained in the paragraph preceding the pseudo-code.
    The complexity of the most complex operation inside the loop is $O(\ndoc^2)$ and the loop has length at most $\ndoc$, so the total complexity of the algorithm is $O(\ndoc^3)$.
\end{proofEnd}

\section{Experiments}
    We perform experiments using two datasets: TREC and MSLR.
Our code will be made available online.\footnote{\url{https://github.com/naver/expohedron}}
Our experiments were performed on a laptop with an Intel\textregistered Core\texttrademark i7-8650U CPU @ 1.90GHz processor.

\subsection{Datasets}
We use the TREC2020 Fairness Track dataset\footnote{\url{https://fair-trec.github.io/2020/index.html}} and the \acrshort{mslr} dataset \cite{DBLP:journals/corr/QinL13}.
The \acrshort{trec} data is actually made up of two datasets: the training set and the test set.
Since our own method does not require any training, we evaluate it on the test set only, but we use the training set on \ltr~, one of our baselines (see below).
The training set contains $200$ queries with a variable number of documents ranging from $10$ to $171$.
For each query-document pair, we estimated relevances scores with values in $[0,1]$ ourselves.
The details of how we computed them will be made available in our github repository.

For the \acrshort{mslr} dataset, we randomly separated it into a training set and a test set of $434$ queries each.
We use the test set for evaluation and we use the training set only for the baseline method that requires one.
For each query-document pair, a graded relevance value in $\{0,1,2,3,4\}$ is provided by the dataset.
We divide those values by $4$ in order to conform to our setting that requires relevance values to be in $[0,1]$.
Thus for \acrshort{mslr}, the relevance vector $\vrho$ is in $\{0, 0.25, 0.5, 0.75, 1\}^\ndoc$.

\subsection{Metrics and baselines}
We use a \acrshort{dbn} exposure model as described by \eqref{eq:dbn}, with parameters $\gamma=0.5$ and $\kappa=0.7$.
These parameters are purely arbitrary, and they correspond to the parameters used in the TREC2020 challenge.
We consider the merit of a document to be its relevance $\vmu=\vrho$, \acrshort{ie} we work with meritocratic fairness.
Therefore the exposure vector guaranteeing fairness is the only $\vcEt\in\Pi(\gamma,\kappa,\vrho)$ such that $\vcEt\propto\vrho$.
If it does not exists we take $\vmu'=\mu + k\vone$ with $k>0$ as small as possible.
We compute this value using a bisection search with \cref{alg:is_inside}.% \texttt{is\_inside}.

% We normalize the utility by the utility achieved with \acrshort{prp} rankings to get the $\nutility$ metric.
% Finally we divide the unfairness (a distance) by the total exposure achieved with a \acrshort{prp} ranking so as to get a metric that does not depend on the number of documents ranked.

We analyze 4 methods in total.
For each method, $\horizon=1\,000$ rankings are delivered.
Exposures cumulate over time and we average the final exposures by the time horizon $\horizon$, \acrshort{ie} the total number of rankings delivered for each query, so as to get a metric that does not depend on $\horizon$.
For the methods requiring a scalarized objective we use
\begin{equation}
    \min \alpha (-\nutility) + (1-\alpha)\nunfairness^2.
\end{equation}
We make $\alpha$ vary in $[0,1]$ with a step $0.05$, getting $21$ values of $\alpha$.
We use the square $\nunfairness^2$ instead of $\nunfairness$, because this simplifies the gradient computation for the baseline \ltr.
% \emph{without} any form of normalization.
% So even though the optimization is done without normalization the graphs represent the normalized, then aggregated metrics.

The methods we compare are the following:
\begin{enumerate}
    \item Our own method (\expo) builds the Pareto frontier in the expohedron. It then chooses a point on this Pareto frontier corresponding to a trade-off $\alpha$ between $\nutility$ and $\nunfairness$.
    It expresses it as the expected value of a distribution over permutations.
    Finally it delivers the sequence of $\horizon$ rankings using $m$-balanced words \cite{sano_m-balanced_2004,kletti_introducing_2022}.
    Intuitively, $m$-balanced words are sequences of letters such that in every sub-string the frequencies of each letter are as equal as possible, given an overall proportion in the whole string.
    In our case using balanced words instead of randomly sampling from our distribution $\cD$ allows to more accurately approach the expected value of $\cD$ at any time.
    \item We execute a controller (\ctrl) \cite{thonet_multi-grouping_2020} and recover the fairness and utility at the last time step.
    This controller computes at each time step the current error with respect to the target exposure and defines the scores at time $t$ as
    \begin{equation}
        \vy_t = \vrho + g(\vcEt - \vcE_t),
    \end{equation}
    where $\vcE_t$ is the exposure vector at time $t$ and $g\geq0$ is a trade-off parameter called \emph{gain}.
    At time $t+1$ the ranking is determined by decreasing scores.
    We make the gain vary in $[0,1]$ on a logarithmic scale.
    \item We sample from a \acrlong{pl} distribution (\pl) as proposed by \citeauthor{diaz_evaluating_2020} \cite{diaz_evaluating_2020} with varying temperature $\temperature$ and with parameters $\vrho$.
    We make the temperature vary in $[0.001, 50]$ on a logarithmic scale.
    \item We use the approach by \citeauthor{oosterhuis_computationally_2021} \cite{oosterhuis_computationally_2021} (\ltr), that we adapted to the \acrshort{dbn} exposure model.
    This baseline requires a training set to find an optimal function $\sigma:\R\to\R$ that to a relevance value $\vrho_i$ associates a score $\sigma(\vrho_i)$ to be used as log-scores of a \acrshort{pl} distribution.
    The function $\sigma$ is then used on the test set and the rankings are sampled by a policy $\PL(\sigma(\vrho_1),\hdots,\sigma(\vrho_\ndoc))$.
\end{enumerate}

\subsection{Results}

\paragraph{Does our method Pareto-dominate the baselines?}
From \cref{theo:pareto} we expect the method \expo~ to yield Pareto-optimal expected exposure vectors.
We experimentally verify this claim by comparing its performance to the baselines in terms of the two objectives $\nutility,\nunfairness$.

\Cref{fig:pareto} confirms that our method Pareto-dominates all the baselines.
We observe that the \ctrl~ baseline is particularly close to the \expo~ method.
This suggests that the \ctrl~ baseline might be Pareto-optimal, in some cases; and it is an interesting empirical result in itself since this method was proposed as a heuristic and has as of yet no proof of optimality.
However \cref{fig:pareto_mslr} also shows that for some parameters, \ctrl~ yields solutions that are not Pareto-optimal (point $(1,1)$).
% even though as of yet, no proof of optimality is available for \ctrl.
Since we work with meritocratic fairness and, since with a very high temperature a \acrshort{pl} policy delivers rankings in a uniform manner, it comes as no surprise that the curve obtained with the \pl~ method does not attain zero unfairness and that the unfairness does not attain its minimum for an infinite temperature.
The \ltr~ method performs expectedly better than a naive \acrshort{pl} policy, but it is not as close to the true Pareto-front as the \ctrl~ method, especially at the pure fairness endpoint, which it does not reach. %This may be due to the fact that

The result for the \acrshort{mslr} dataset in \cref{fig:pareto_mslr} present a particularity in that the \pl, \ltr~ and \expo~ curves have a normalized unfairness lower than 1, while being of maximal utility, whereas \ctrl~ reaches the point $(1,1)$ for a gain of $0$.
In order to achieve $\nunfairness<1$ and $\nutility=1$, the \ctrl~ method would seem to require a gain slightly larger than $0$.
This can be explained by the fact that the \acrshort{mslr} dataset contains a lot of duplicates relevance values, as the values are all in $\{0, 0.25, 0.5, 0.75, 1\}$.
This means that there exist many different \acrshort{prp} rankings, and delivering several different \acrshort{prp} rankings can already reduce the unfairness by a large amount without reducing the utility.
This happens naturally for a \acrshort{pl} policy, because items with equal scores will have equal expected exposure, it happens for \expo~ because it is Pareto-optimal, but it does not happen for the controller when the gain is set to $0$.

\begin{figure}
    \centering
    \begin{subfigure}[b]{0.22\textwidth}
        \centering
        % This file was created by tikzplotlib v0.9.9.
\begin{tikzpicture}

\definecolor{color0}{rgb}{0.75,0.75,0}

\begin{axis}[
legend cell align={left},
legend style={
  fill opacity=0.8,
  draw opacity=1,
  text opacity=1,
  at={(0.03,0.97)},
  anchor=north west,
  draw=white!80!black
},
tick align=outside,
tick pos=left,
title={Aggregated Pareto fronts},
x grid style={white!69.0196078431373!black},
xlabel={Normalized utility $\nutility$},
xmajorgrids,
xmin=0.7, xmax=1.01,
xtick style={color=black},
y grid style={white!69.0196078431373!black},
ylabel={Normalized unfairness $\nunfairness$},
ymajorgrids,
ymin=-0.05, ymax=1.05,
ytick style={color=black},
width=4.5cm,
height=6cm
]
% \tikzmath{\x1 = 1;}
\addplot [red, dashed, mark=*, mark size=1.5, mark options={solid}]
table {%
0.775375112121316 0
0.78147977572292 0.0092678044453467
0.79090695009833 0.0259683778173305
0.803147648979073 0.0491738271428365
0.812760337345936 0.0685431623164238
0.820764470348525 0.0859938188482096
0.828807151983882 0.103741018904232
0.838351243556377 0.125994503408065
0.849111517505538 0.15298432689513
0.85742331442521 0.173517614502826
0.865568222913163 0.19478067388471
0.874984554951533 0.221485084992183
0.8842606854668 0.249416186374157
0.895500392797671 0.285828046213524
0.907068511058828 0.326000608551475
0.919971844111792 0.375919466314644
0.935379026716027 0.442065829704896
0.952189142837411 0.527073244888376
0.973603131811364 0.657297144276602
0.993643987525748 0.836743953793892
0.999999999944188 0.993768137408836
};
\addlegendentry{\expo}
\addplot [color0, dashed, mark=*, mark size=1.5, mark options={solid}]
table {%
1 1
0.993146224778198 0.829667283077413
0.976182275754992 0.674972276086076
0.945657737091824 0.490224198182762
0.897559815985678 0.29242012933644
0.86206614140096 0.185943465147129
0.800883252235699 0.0480505961501712
0.788564982179503 0.0244803505007316
0.778281833815131 0.00537154282789199
0.776923054900183 0.00302520767840848
0.775845383319564 0.00148870056769898
0.775705817801732 0.00137364435481179
};
\addlegendentry{\ctrl}
\addplot [green!50!black, dashed, mark=*, mark size=1.5, mark options={solid}]
table {%
0.999847307483533 0.972703861808963
0.996588224627014 0.897865731519242
0.986894450469216 0.793706010807349
0.955078520410327 0.587681201318961
0.866658850790854 0.247589843876544
0.79992424806472 0.0763737736725384
0.756905070160893 0.0550759661564176
0.728556793751648 0.108018329372883
0.718803435320546 0.128579940805523
0.711114686910817 0.14390096672583
0.709866111114487 0.146526830480068
0.708789676522604 0.148818482849687
};
\addlegendentry{\pl}
\addplot [blue, dashed, mark=*, mark size=1.5, mark options={solid}]
table {%
0.776375462740693 0.0340230288900443
0.779645524263751 0.0345373380443067
0.783579051380052 0.0371398272422996
0.788226576294388 0.0424735726957682
0.793894591202173 0.0511939752292915
0.801074704169771 0.0641563932089553
0.810347152162433 0.0824786140141494
0.82140462427175 0.105410790035226
0.83413817731938 0.133007844435218
0.848868609964267 0.167122041371429
0.86663747885326 0.214769946396492
0.879972587462328 0.252772264957569
0.890950909925444 0.286493500596741
0.89734220309137 0.31310673258721
0.905347299116849 0.340359309183684
0.917647778328338 0.385950342506814
0.932196878946112 0.447293863559039
0.951256342048539 0.539520174918245
0.970872763563049 0.65862105964256
0.991507775827483 0.831539136428047
0.999883868464372 0.980672012009467
};
\addlegendentry{\ltr}
\end{axis}

\end{tikzpicture}
        \caption{\acrshort{trec}}
    \label{fig:pareto_trec}
    \end{subfigure}
    \hfill
    \begin{subfigure}[b]{0.22\textwidth}
        \centering
        % This file was created by tikzplotlib v0.9.9.
\begin{tikzpicture}

\definecolor{color0}{rgb}{0.75,0.75,0}

\begin{axis}[
legend cell align={left},
legend style={
  fill opacity=0.8,
  draw opacity=1,
  text opacity=1,
  at={(0.03,0.97)},
  anchor=north west,
  draw=white!80!black
},
tick align=outside,
tick pos=left,
title={Aggregated Pareto fronts},
x grid style={white!69.0196078431373!black},
xlabel={Normalized utility $\nutility$},
xmajorgrids,
xmin=0.7, xmax=1.01,
xtick style={color=black},
y grid style={white!69.0196078431373!black},
ymajorgrids,
ymin=-0.05, ymax=1.05,
ytick style={color=black},
width=4.5cm,
height=6cm
]
% ,
% width=8cm,
% height=5cm
\addplot [red, dashed, mark=*, mark size=1.5, mark options={solid}]
table {%
0.730701926960178 0
0.733385855091283 0.00275745789491244
0.743372078655511 0.0151117907771046
0.757571206781209 0.0337358697619731
0.773466542520624 0.055641830771297
0.788887414234442 0.0786817715044277
0.804451276872889 0.103015498734132
0.822905374040111 0.135174482333407
0.837879858342146 0.16251133580729
0.855270313421638 0.194222440581097
0.869021263566367 0.221605361120982
0.883804161573403 0.252639719944466
0.898212403095502 0.284783975101793
0.913681050641211 0.321635953291831
0.932048741025794 0.369141285553034
0.949394896048568 0.417934486176822
0.968761319149453 0.47900241336683
0.985415161637599 0.539811658491438
0.998176114920149 0.598331254111569
0.999974933025333 0.6102683298091
1.0000000002684 0.610821100453368
};
\addlegendentry{\expo}
\addplot [color0, dashed, mark=*, mark size=1.5, mark options={solid}]
table {%
1 1
1 0.610967708678216
0.993495565871043 0.580861367201155
0.960395871752703 0.460866799869364
0.888786042124612 0.272631936570754
0.832866610270991 0.163725881774665
0.754733218355683 0.036478370267319
0.742890687856838 0.0185210184821848
0.733296931690685 0.00421198542692066
0.732094591254442 0.00256014634780036
0.730937454592573 0.00168211299870735
0.73076258694647 0.00168022773897939
};
\addlegendentry{\ctrl}
\addplot [green!50!black, dashed, mark=*, mark size=1.5, mark options={solid}]
table {%
1 0.615946042623925
0.99998651947875 0.615108364628752
0.994294009200546 0.615260500884104
0.907611282014481 0.594314911418484
0.595186584876591 0.426722563050589
0.39782028673949 0.464934026166987
0.307801903403197 0.540573340410815
0.262211316981787 0.582700384612825
0.247919856992604 0.595240725128647
0.237283098400586 0.60534889698625
0.236365731284403 0.605926698490384
0.234854136767845 0.607239686530626
};
\addlegendentry{\pl}
\addplot [blue, dashed, mark=*, mark size=1.5, mark options={solid}]
table {%
0.725727811500187 0.0802891966886857
0.739859125001145 0.0779161966160102
0.756408706236698 0.0860151928543927
0.775742428605901 0.103975369689734
0.796919887673006 0.132884477537166
0.824341079076 0.170486714781877
0.855489673387074 0.218661737653618
0.885576116883649 0.274311683661772
0.908991211670441 0.33285059861954
0.937722023746231 0.395710724642321
0.958109871679382 0.466492133538069
0.968762403777808 0.531071912953751
0.97878462337593 0.577500328638093
};
\addlegendentry{\ltr}
\end{axis}

\end{tikzpicture}
        \caption{\acrshort{mslr}}
    \label{fig:pareto_mslr}
    \end{subfigure}
    \caption{For both the \acrshort{trec} and the \acrshort{mslr} dataset, and for our baselines the averaged normalized unfairness metric and the average normalized utility metric are plotted for various trade-off parameters after delivery of $\horizon=1\,000$ rankings.}
    \label{fig:pareto}
\end{figure}
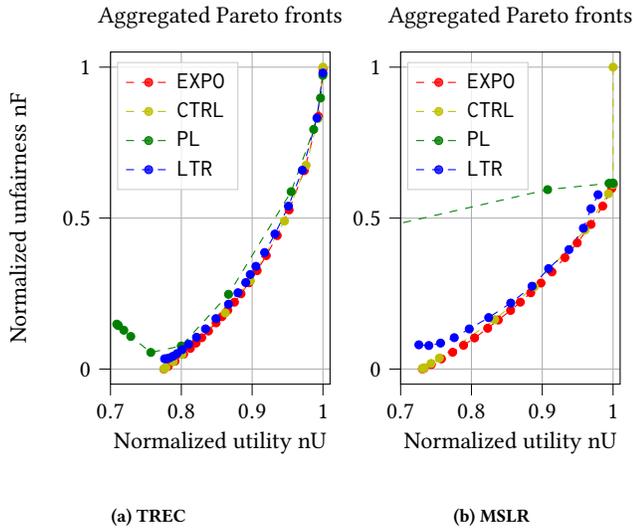

\paragraph{How does the fairness evolve over time?}
Assume that we are in a situation in which our goal is to minimize the unfairness only, without consideration for the utility.
We choose for each method the parameters $\alpha$, $\tau$ or $g$ that minimize the normalized unfairness.
Then we look at how the average normalized unfairness metric evolves over the delivery of $\horizon=1\,000$ rankings, for our different methods.

\begin{figure}
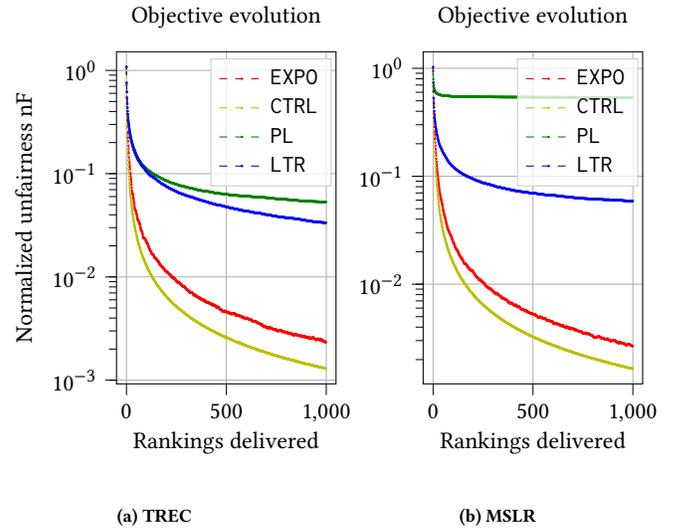

    \centering
    \begin{subfigure}[b]{0.22\textwidth}
        \centering
        \input{fig/convergence_plots_TRECmeritocratic1000}
        \caption{\acrshort{trec}}
    \label{fig:convergence_trec}
    \end{subfigure}
    \hfill
    \begin{subfigure}[b]{0.22\textwidth}
        \centering
        \input{fig/convergence_plots_MSLRmeritocratic1000}
        \caption{\acrshort{mslr}}
    \label{fig:convergence_mslr}
    \end{subfigure}
    \caption{For both the \acrshort{trec} and the \acrshort{mslr} dataset, and for our baselines both the unfairness metric and the utility metric are plotted for various parameters after delivery of $\horizon=1\,000$ rankings.}
    \label{fig:convergence}
\end{figure}

The results, displayed in \cref{fig:convergence}, show that with the \pl~method, the metric does not converge to $0$, as can be expected from \cref{fig:pareto}.
Furthermore the \ctrl~method seems to converge more quickly to $0$ than the \expo~method.
This may be due to the fact that the \ctrl~method is at liberty of choosing from any possible ranking at each time step, while the \expo~method, delivered with balanced words, has a support of size at most $\ndoc$.
Recall, however, that convergence to $0$ of the controller (let alone "faster than \expo") is merely an experimental observation, whereas it is theoretically guaranteeed with \expo.
% However one should not forget that this is merely an experimental result on two dataset and that while the unfairness \emph{seems} to converge to $0$ with the \ctrl~method, for the \expo~method we have the proof that it converges to $0$.

\paragraph{Is our method efficient?}
The time taken by the methods cannot be fairly compared by one number alone.
For instance our \expo~ method has a moderate initial computation cost for each query, but once an optimal distribution $\cD$ is found, a policy can be delivered very efficiently.
On the other hand the \ctrl~ method has no initial cost, but some computations need to be done each time a new ranking is delivered.
The \ltr~ method has a very high initial cost of training, but once a mapping is found the cost of ranking delivery is low.
The \pl~ method has no initial cost and it can be delivered efficiently.
To account for these subtleties, we report separately for each method the time for pre-computation (which is constant) and the time for delivery (which is proportional to the number $\horizon$ of rankings delivered).

In \cref{tab:runtimes} we report the average time it takes to deliver $T=1\,000$ rankings for each query, for both datasets.
The results show the extraordinary efficiency of using an $m$-balanced word generator to deliver a ranking policy found with the \expo~ method.
They also show the efficiency of our Carathéodory decomposition algorithm and of our Pareto-front identification algorithm, especially \acrshort{wrt} a learning approach \ltr.
Finally we note that the \ctrl~ method has no initial cost, but a higher delivery cost per ranking, than our \expo~ method.
Hence it may be faster for short time horizons $\horizon$, but will be slower for longer horizons.

\begin{table}[h]
    % \small
    \centering
    \caption{For both datasets and a given trade-off parameter, we report the time in seconds, that it took to deliver $\horizon=1\,000$ rankings.
    \expo~Pareto is the time it took to compute the whole Pareto-front using our \expo~ method, averaged over all the queries.
    \expo~ \acrshort{gls} is the time it took to compute the Carathéodory decomposition of the fairness endpoints, averaged over all the queries.
    \expo~ delivery is the time it took to deliver a total of $\horizon=1\,000$ rankings using an $m$-balanced word generator, averaged over all the queries.
    \ltr~ learning is the total training time of the whole training dataset, while \ltr~delivery it the average time it took to deliver a total of $\horizon=1\,000$ rankings by sampling a \acrshort{pl} distribution.
    \ctrl~ is the time it took to deliver a total of $\horizon=1\,000$ rankings using the \ctrl~ method, averaged over all the queries.}
    % It appears that for $T=1\,000$ rankings, \acrshort{ctrl} is the fastest method.
    % In the long run however, for $T=10\,000$, \acrshort{expo} becomes the fastest method for \acrshort{trec}.
    % This happens because \acrshort{expo} has a fixed cost at the beginning, but the delivery using balanced words is very cheap, whereas \acrshort{ctrl} has a cost that increases linearly with $T$.}
    \tabcolsep=0.11cm
    % \singlespacing
    \begin{tabular}{|c|c|c|}
        \hline
        & \acrshort{trec} & \acrshort{mslr}\\
        \hline
        \expo~ Pareto & 0.9962 & 0.1853 \\
        \expo~ \acrshort{gls} & 1.0591 & 4.350 \\
        \expo~ delivery & 0.0019 & 0.0054 \\
        \ctrl & 0.0568 & 0.0796 \\
        \pl & 0.0141 & 0.0630 \\
        \ltr~learning & 468.5 & 1746.2 \\
        \ltr~delivery & 0.0143 & 0.0631 \\
        \hline
    \end{tabular}
    % \vspace{\baselineskip}
    \label{tab:runtimes}
\end{table}

\section{Conclusion}
    
In this paper, we propose a geometric method to compute Pareto-optimal ranking policies to maximize both user utility and item producer fairness.
Our method is the first that provably solves this key timely problem for a realistic \acrshort{dbn} exposure model---while prior literature was limited to a \acrshort{pbm} exposure model---; and it works in reasonable time and space complexity.
Our experiments show that it outperforms classical models based on \acrshort{pl} distributions, which are not optimal. Interestingly though, we find that a simple heuristic controller does (almost) as good as our method in terms of optimality.
This is interesting as it experimentally shows optimality of the controller in our experiments, and also because it offers an alternative to adapt to the situation at stake: the controller has no pre-computation time and is faster for delivering just a few rankings; whereas our method has a (small) offline pre-computation time, but is much faster per delivered ranking.
Of course our method also has the advantage of the theoretical guarantee of optimality in any case.

\begin{acks}
    This work has been partially supported by MIAI@Grenoble Alpes, (ANR-19-P3IA-0003).
\end{acks}

\clearpage
\newpage
%\printglossaries
\bibliographystyle{ACM-Reference-Format}
\balance
\bibliography{SIGIR2022,biblio}
\clearpage
\vspace{0.1\textheight}

% \newpage
\appendix
\section*{Appendices}
\renewcommand{\thesubsection}{\Alph{subsection}}
    \subsection{Proofs}
        \printProofs
    \subsection{Click Models}\label{app:click}
        We will refer tho the click models described in the book \cite{chuklin_click_2015} and, in particular, use the same notations. 

\subsubsection{\acrfull{dbn}}
The general recursive formulation of the \acrshort{dbn} Exposure (\acrshort{aka} Examination probability) for the query-document pair $(q,d)$ at rank $k$ is:
\begin{equation}
    \epsilon_k = \epsilon_{k-1} \gamma (1 - \alpha_{q,d} \sigma_{q,d})
\end{equation}
where $\gamma$ is the patience parameter, $\alpha_{q,d}$ is the attraction probability of document $d$ with respect to $q$ and $\sigma_{q,d}$ is the probability that the user will be satisfied with that document and stop looking further down in the list. In this model, the product $\alpha_{q,d} \sigma_{q,d}$ is the relevance probability i.e. $\P(r_{q,d}=1) = \alpha_{q,d} \sigma_{q,d}$.
This model is equivalent to our general expression in \eqref{eq:dbn}, with $\rho = \alpha_{q,d} \sigma_{q,d}$ and $\kappa=1$.

\subsubsection{\acrfull{sdbn}}
The \acrshort{sdbn} Model is the particular case of the \acrshort{dbn} model with a patience parameter $\gamma$ equal to one. Its general recursive formulation for the query-document pair $(q,d)$ at rank $k$ is:
\begin{equation}
    \epsilon_k = \epsilon_{k-1} (1 - \alpha_{q,d} \sigma_{q,d})
\end{equation}
Define $\omega:=\frac{1}{2}\min\alpha_{q,d} \sigma_{q,d}$.
Assuming that $\alpha_{q,d} \sigma_{q,d}$ is always in $(0,1)$, this model is equivalent to our general expression in \eqref{eq:dbn}, with $\gamma = 1 - \omega$, and $\kappa = 1$ and $\rho=1-\frac{1-\alpha_{q,d} \sigma_{q,d}}{1-\omega}$.

\subsubsection{\acrfull{cm}}
The Cascade Model is the particular case of the \acrshort{dbn} model with a patience parameter $\gamma$ equal to one and a satisfaction probability $\sigma_{q,d}$ also equal to one. In other words, it assumes that if a document snippet looks relevant, then the document is also relevant and the user will be fully satisfied after clicking on it; consequently, she will stop after one single click. Its general recursive formulation for the query-document pair $(q,d)$ at rank $k$ is:
\begin{equation}
    \epsilon_k = \epsilon_{k-1} (1 - \alpha_{q,d})
\end{equation}
In this model, the term $\alpha_{q,d}$ is considered as the relevance probability i.e. $\P(r_{q,d}=1) = \alpha_{q,d}$.
This model is equivalent to the \acrshort{sdbn}, with $\sigma_{q,d}=1$ and therefore also equivalent to our general expression \eqref{eq:dbn} if all $\alpha_{q,d}\in(0,1)$.

\subsubsection{\acrfull{dcm}}
The \acrshort{dcm} generalizes the \acrfull{cm} by still allowing a user to go down in the list even after clicking on one document. In other words, it allows the user to click on more than one document. The general recursive formulation of the \acrshort{dcm} Exposure for the query-document pair $(q,d)$ at rank $k$ is:
\begin{equation}
    \epsilon_k = \epsilon_{k-1} (1 - \alpha_{q,d} (1 - \lambda))
\end{equation}
where $\lambda$ is the probability of continuing the examination of the next items in the list after clicking on a document.
This model is equivalent to the \acrshort{sdbn} with $\sigma_{q,d}=(1 - \lambda)$ and therefore also equivalent to our general expression \eqref{eq:dbn} if all $\alpha_{q,d}(1-\lambda)\in(0,1)$.

Note that some versions of the \acrshort{dcm} exist with a rank-dependent $\lambda_k$, instead of a single continuation parameter $\lambda$.
These versions do not fit our general expression. % Are you sure ?

\subsubsection{\acrfull{ccm}}
The \acrshort{ccm} also generalizes the \acrfull{cm}, but in a much more flexible way. The general recursive formulation of the \acrshort{ccm} Exposure for the query-document pair $(q,d)$ at rank $k$ is:
\begin{equation}
    \epsilon_k  =  \epsilon_{k-1} (\alpha_{q,d} ((1-\alpha_{q,d} )\tau_2 +\alpha_{q,d} \tau_3) + (1 - \alpha_{q,d})\tau_1) 
\end{equation}
which can be re-written as:
\begin{equation}
    \epsilon_k  =  \epsilon_{k-1} \tau_1 (1- \kappa_{q,d})
\end{equation}
with $\kappa_{q,d} = \alpha_{q,d} \left( \frac{\tau_1-\tau_2}{\tau_1} + \alpha_{q,d} \frac{\tau_2-\tau_3}{\tau_1}\right)$. See \cite{chuklin_click_2015} for the meaning of the $\tau_i$ parameters. 

This model is equivalent to our general expression in \eqref{eq:dbn}, with $\gamma = \tau_1$ and $\rho = \alpha_{q,d} \left( \frac{\tau_1-\tau_2}{\tau_1} + \alpha_{q,d} \frac{\tau_2-\tau_3}{\tau_1}\right)$ and $\kappa=1$.

\end{document}